\documentclass{article}
\usepackage{graphicx}
\usepackage{amsfonts, amsmath, amsthm, amssymb}
\usepackage{booktabs}
\usepackage{xcolor}
\usepackage{mathtools}
\usepackage{setspace} 
\usepackage[left=1in,top=1in,right=1in,bottom=1in]{geometry}
\usepackage{multirow}
\usepackage{algorithm}
\usepackage{algorithmic}
\usepackage[round]{natbib}
\usepackage[colorlinks=true, linkcolor=blue, citecolor=blue, urlcolor=blue]{hyperref}
\usepackage{pifont} 

\newtheorem{remark}{Remark}
\newtheorem{theorem}{Theorem}[section]

\newtheorem{lemma}[theorem]{Lemma}

\newcommand{\RR}[0]{\mathbb{R}}

\newcommand{\norm}[1]{\left\lVert#1\right\rVert}

\title{Efficient Targeted Maximum Likelihood Estimators\\ for Two-Phase Design Problems}
\author{Sky Qiu$^{0,1,2}$, Susan Gruber$^2$, Pamela A. Shaw$^{3,4}$, Brian D. Williamson$^{3,4}$, \\and 
Mark J. van der Laan$^1$}
\date{\today}

\begin{document}
\maketitle
\begin{abstract}
In a typical two-phase design, a random sample is drawn from the target population in phase 1, during which only a subset of variables is collected. In phase 2, a subsample of the phase-1 cohort is selected, and additional variables are measured. This setting induces a coarsened data structure on the data from the second phase. We assume coarsening at random, that is, the phase-2 sampling mechanism depends only on variables fully observed. We review existing estimators, including the generalized raking estimator and the inverse probability of censoring weighted targeted maximum likelihood estimation (IPCW-TMLE) along with its extensions that also target the phase-2 sampling mechanism to improve efficiency. We further introduce a new class of estimators constructed within the TMLE framework that are asymptotically equivalent.
\end{abstract}

\section{Introduction}
\footnotetext[0]{Correspondence: \tt sky.qiu@berkeley.edu}
\footnotetext[1]{Division of Biostatistics, School of Public Health, University of California, Berkeley, CA, USA}
\footnotetext[2]{Targeted ML Solutions Inc., Cambridge, MA, USA}
\footnotetext[3]{Biostatistics Division, Kaiser Permanente Washington Health Research Institute, Seattle, WA, USA}
\footnotetext[4]{Department of Biostatistics, University of Washington, Seattle, WA, USA}

We consider a general two-phase design in which the full data are denoted by $X\sim P_{X,0}\in\mathcal{M}^F$, where $P_{X,0}$ is the probability distribution for the full data in a nonparametric statistical model $\mathcal{M}^F$. In phase 1, only a subset of the full data is observed for all individuals, while in phase 2, a subsample is selected for additional data collection, so that the full data are only available on the individuals in phase 2. The observed data can be represented as $O=(Z,\Delta,\Delta X)$, where $Z\subset X$ denotes the components of the full data observed on everyone, and $\Delta\in\{0,1\}$ indicates phase-2 membership. We assume coarsening at random (CAR) (\cite{heitjan_ignorability_1991,gill_coarsening_1997}), meaning that the phase-2 sampling indicator $\Delta$ depends only on $Z$, which is fully observed. We observe $n$ i.i.d. copies of $O$ drawn from a true distribution $P_0\in\mathcal{M}$, where $\mathcal{M}$ is the observed-data model implied by the CAR assumption.

As a motivating example, we specialize this general problem setup to a point-treatment setting. Let the full data be $X=(W,A,Y)$, where $W=(W_1,W_2)\in\mathcal{W}\subset\mathbb{R}^d$ are baseline covariates, $A\in\{0,1\}$ is a binary treatment, and $Y\in\mathcal{Y}\subset\mathbb{R}$ is an outcome. In phase 1, we observe $V=(W_1,A,Y)$ for all individuals. In phase 2, additional baseline covariates $W_2$ are collected only for a subsample. The observed data take the form $O=(V,\Delta,\Delta W_2)$. We assume that the phase-2 sampling mechanism depends only on $V$, which is observed for all individuals, and hence satisfies the CAR assumption. This observed data structure can be viewed as a coarsened version of the full data $X$, with coarsening induced by the two-phase sampling design. Our target parameter is the average treatment effect (ATE) $\Psi^F:\mathcal{M}^F\rightarrow\RR$, a full-data parameter, defined as $\Psi^F(P_X)\coloneq E_{P_X}(Y_1-Y_0)$, where $Y_1$ and $Y_0$ are the potential outcomes under treatment and control, respectively (\cite{rubin_estimating_1974}). We consider $\Psi^F(P_{X,0})$ as our target estimand for this article. Methods introduced in this article can also be used to estimate other causal contrasts involving $E_{P_X}Y_1$ and $E_{P_X}Y_0$. They can also be generalized to estimate any full-data parameters that can be nonparametrically identified.

To provide more context in which such data structure would arise, consider an example where we aim to estimate the effect of a type 2 diabetes medication on the one-year risk of experiencing a major adverse cardiac event (MACE). In phase 1, we may identify a cohort of patients eligible to receive the medication and collect their baseline covariate information ($W_1$) such as age, sex, and race/ethnicity, that is often readily available from electronic health records. At baseline, patients are assigned to receive either the medication ($A=1$) or a comparator treatment ($A=0$). In phase 2, we select a subsample of patients from the initial cohort, perhaps those screened to have high cardiovascular risk (based on their $W_1$), to collect more detailed data ($W_2$). This may include biomarker measurements and information obtained through interviews, such as lifestyle, diet, and smoking status. These additional covariates are often expensive to acquire and require considerable effort, and are therefore only available for the subsample selected in phase 2. At the one-year follow-up, all patients are assessed for the occurrence of MACE, regardless of whether they were selected for additional data collection. This design would generate the observed data $O=(W_1,\Delta,\Delta W_2,A,Y)$. Other examples of this data structure include nested case-control studies, where the phase-2 sampling mechanism depends on the outcome $Y$ (\cite{ernster_nested_1994}), and the surrogate marker problem (Example 1 in Chapter 14 of \cite{tsiatis_semiparametric_2006}), where $W_2$ represents an expensive biomarker (e.g., measured from plasma) available only for a subset of patients, and $W_1$ is a cheaper surrogate for $W_2$ that is measured for all patients.

\begin{table}[ht]
\centering
\begin{tabular}{|ccccc|}
\hline
$W_1$ & $W_2$ & $A$ & $Y$ & $\Delta$ \\
\hline
\checkmark & \ding{55} & \checkmark & \checkmark & \multirow{3}{*}{$0$} \\
\vdots     & \vdots    & \vdots     & \vdots     &                      \\
\checkmark & \ding{55} & \checkmark & \checkmark &                      \\
\hline
\checkmark & \checkmark & \checkmark & \checkmark & \multirow{3}{*}{$1$} \\
\vdots     & \vdots     & \vdots     & \vdots     &                      \\
\checkmark & \checkmark & \checkmark & \checkmark &                      \\
\hline
\end{tabular}
\caption{Observed data structure for an example of a two-phase design. Here, ``\checkmark'' denotes an observed value, and ``\ding{55}'' denotes a missing value. In phase-1, we observe $V=(W_1,A,Y)$ for all individuals, while in phase-2 we additionally observe $W_2$ for a subsample with $\Delta=1$.}
\label{table:twophase}
\end{table}

This article is organized as follows. In Section \ref{sec:preliminary}, we lay out the necessary preliminaries, including notation, the efficient influence curve (EIC) which is an important object for constructing semiparametrically efficient estimators, a brief introduction to targeted maximum likelihood estimation (TMLE), and a review of a particular remainder term that is crucial for studying the asymptotic properties of a TMLE and will be used in analyzing our newly proposed estimators in Sections \ref{sec:method_aipcw_rearrange} and \ref{sec:real_tmle}. In Section \ref{sec:review}, we review existing estimators in the literature. In particular, we discuss the inverse probability of censoring weighted TMLE (IPCW-TMLE) and its extensions that also target the phase-2 sampling mechanism to improve efficiency. We also describe the generalized raking (GR) estimator (\cite{deville_calibration_1992,lumley_connections_2011}), which is popular in literature and perform well in simulation studies (e.g., \cite{shepherd_multiwave_2023} and \cite{williamson_assessing_2026}). However, we note that GR generally lacks consistency with respect to the target causal estimand (defined in a nonparametric statistical model and carries causal interpretations when the relevant identification assumptions hold) unless both the full-data outcome regression and the phase-2 sampling mechanism are consistently estimated. Nevertheless, we highlight that the optimization problem solved by the raking estimator provides an alternative approach to solving the empirical mean of the phase-2 sampling mechanism component of the EIC, achieving a similar goal to TMLE targeting. In Section \ref{sec:method_aipcw_rearrange}, we show that simple rearrangements of the EIC give rise to new, yet asymptotically equivalent, estimators. Then, in Section \ref{sec:real_tmle}, we adopt an alternative representation of the target parameter which leads naturally to another new TMLE. In Section \ref{sec:analyze_exact_rem}, we study the structure of the exact remainder term to understand the robustness properties of the estimators proposed in this article. In Section \ref{sec:simulation}, we present simulation studies to evaluate the performance of the estimators discussed. We provide concluding remarks in Section \ref{sec:conclusion}.

\section{Preliminaries}\label{sec:preliminary}
We begin by defining the notation used throughout the article. We then review key concepts in semiparametric efficient estimation, including the efficient influence curve, TMLE, and the exact remainder term which is central to understanding the asymptotic behavior of an asymptotically linear estimator. This section focuses on summarizing the fundamental theorems and results necessary to build a foundation for the estimators discussed in the literature review in Section \ref{sec:review} and for the new estimators introduced in Sections \ref{sec:method_aipcw_rearrange} and \ref{sec:real_tmle}.

We make standard identification assumptions for the ATE target parameter in the \textit{full-data world.} Specifically, we assume consistency, that is, $Y=Y_a$ when $A=a$ for $a\in\{0,1\}$, for all individuals. We further assume no unmeasured confounding, meaning that the potential outcomes $(Y_0, Y_1)$ are independent of treatment assignment $A$ conditional on the baseline covariates $W$. We also impose treatment positivity, i.e., $0 < P(A=1\mid W)<1$ for $P_W$ almost everywhere. In addition, we assume a coarsening at random (CAR) assumption that $\Delta$ is independent of $X$ given $V$ (\cite{heitjan_ignorability_1991,gill_coarsening_1997}). That is, the phase-2 sampling mechanism depends only on the covariates that are fully observed in phase 1. Finally, we impose phase-2 sampling mechanism positivity, that is, the probability of someone being selected in phase 2 is non-zero.

\subsection{Notations and terminology}
We summarize the notation used throughout this article in Table \ref{tab:notations}. In particular, we denote the phase-2 sampling mechanism under a probability distribution $P$ by $\Pi_P(V)\coloneq P(\Delta=1\mid V)$. For example, $\Pi_0$ represents the true phase-2 sampling mechanism under the data-generating distribution $P_0$, while $\Pi_n$ denotes an estimator for $\Pi_0$ based on the observed data. We define the full-data outcome regression and treatment mechanism as $Q_P(A,W)\coloneq E_P(Y\mid A,W)$ and $g_P(1\mid W)\coloneq P(A=1\mid W)$, respectively. They are referred to as full-data functions because they depend on $W$, which is only fully observed in the ideal full-data setting. For simplicity, we will often refer to these functions as $\Pi$, $Q$, and $g$. We adopt the notation $Pf\equiv\int f(O)dP$ from empirical process literature throughout. For example, $P_nf\equiv 1/n\sum_{i=1}^nf(O_i)$. Let $\Gamma\coloneq P_{X\mid V}$ be the conditional distribution of $X$ given $V$.

\begin{table}[ht]
\centering
\resizebox{11cm}{!}{
\begin{tabular}{r|l}
\toprule
\textbf{Notation} & \textbf{Description} \\ \midrule
$W_1\in\RR^d$ & Baseline covariates collected in phase-1 \\
$W_2\in\RR^d$ & Baseline covariates collected in phase-2 \\
$W=(W_1,W_2)$ & Baseline covariates \\
$A\in\{0,1\}$ & Treatment assignment \\
$Y\in\RR^d$ & Outcome \\ 
$\Delta\in\{0,1\}$ & Phase-2 membership indicator \\ 
$V=(W_1,A,Y)$ & Variables fully observed \\ 
$O=(V,\Delta,\Delta W_2)$ & Observed data \\
$X=(W,A,Y)$ & Full data\\ 
$\mathcal{M}$ & Statistical model for the observed data\\
$\mathcal{M}^F$ & Statistical model for the full data\\ \midrule
$Q_{P}(A,W)\coloneq E_{P}(Y\mid A,W)$ & Full-data outcome regression \\
$g_{P}(a\mid W)\coloneq P(A=a\mid W)$ & Full-data treatment mechanism \\
$\Pi_{P}(V)\coloneq P(\Delta=1\mid V)$ & Phase-2 sampling probability \\ \midrule
$\Psi^F(P)\coloneq E_P(Y_1-Y_0)$ & ATE full-data parameter \\
$D^F_{P_X}(X)$ & Efficient influence curve of $\Psi^F$ at $P_X\in\mathcal{M}^F$\\
\bottomrule
\end{tabular}
}
\caption{Notations and their descriptions.}
\label{tab:notations}
\end{table}

\subsection{The efficient influence curve}
In the full-data world, the efficient influence curve (EIC) $D^F_{P_X}$ for the full-data parameter $\Psi^F$ at $P_X\in\mathcal{M}^F$, where $\mathcal{M}^F$ is the statistical model for the full data, is well known in the literature and is given by
$$
D^F_{P_X}(X)=\left[\frac{A}{g_{P_X}(1\mid W)}-\frac{1-A}{g_{P_X}(0\mid W)}\right]
(Y-Q_{P_X}(A,W))+Q_{P_X}(1,W)-Q_{P_X}(0,W)-\Psi^F(P_X).
$$
We refer readers to \cite{levy_tutorial_2019} and \cite{hines_demystifying_2022} for tutorials on deriving EICs. In the observed data world under the CAR assumption, \cite{robins_estimation_1994,robins_analysis_1995} (also Theorem 1.3 in \cite{vdl_unified_2003}) provides a general mapping from full-data estimating functions to observed-data estimating functions. Specifically, define $\Psi:\mathcal{M}\rightarrow\RR$ as $\Psi(P_{P_X,\Pi})=\Psi^F(P_X)$, the efficient influence curve of $\Psi$ at $P_0=P_{P_{X,0},\Pi_0}$ equals $\Delta/\Pi_0(V)D^F_{P_{X,0}}$ minus its projection on $\mathcal{T}(\text{CAR})$, the tangent space of the model for $\Pi_0$. That is, the efficient influence curve of $\Psi$ at $P_0$ is
$$
D_{P_0}=\underbrace{\frac{\Delta D^F_{P_{X,0}}}{\Pi_0(V)}}_{\text{IPCW}}-\underbrace{\frac{E(D^F_{P_{X,0}}\mid\Delta=1,V)}{\Pi_0(V)}(\Delta-\Pi_0(V))}_{\text{proj. onto $\mathcal{T}$(CAR)}}.
$$
We will also refer to the projection on $\mathcal{T}(\text{CAR})$ term as the $\Pi$-component of the EIC since it is of the form a score in $\Pi$, i.e., a function of $V$ multiplied by the residual $(\Delta-\Pi_0(V))$. 

\subsection{Targeted maximum likelihood estimation}
Targeted maximum likelihood estimation (TMLE) (\cite{vdl_targeted_2006, vdl_targeted_2011, vdl_targeted_2018}) is a general framework for constructing asymptotically linear, locally efficient, and \textit{plug-in} estimators of statistical parameters under realistic assumptions on the data-generating process. A key distinguishing feature of TMLE is its plug-in property, which ensures that the resulting estimator respects known bounds of the parameter space (e.g., estimated probabilities remain within $[0,1]$). TMLE is designed to integrate modern machine learning (ML) algorithms via super learner (SL), an ensemble ML approach (\cite{vdl_super_2007}), while still yielding valid statistical inference for a broad class of target estimands often of interest in causal inference literature. TMLE builds on a long line of work in semiparametric statistics, including one-step estimators (\cite{bickel_efficient_1993}), estimating equations (augmented inverse probability of censoring weighting, i.e., A-IPCW) (\cite{robins_estimation_1994,robins_analysis_1995,vdl_unified_2003, tsiatis_semiparametric_2006}), and sieve MLE (\cite{shen_on_1997,chen_large_2007}). Each of these methods contributed important ideas but had limitations. For example, one-step and A-IPCW estimators, while efficient, can be unstable due to not being plug-in.

TMLE addresses these issues by bringing the likelihood principle back into focus, while also tailoring the estimation to the particular estimand of interest. This results in an estimator that retains the desirable theoretical properties of earlier approaches, such as double robustness, while offering robust finite-sample performance. The TMLE procedure generally begins by estimating key components of the likelihood. It then performs a targeting step, which updates the initial estimate along a fluctuation submodel through the initial likelihood, with the goal of solving the empirical mean of the efficient influence curve evaluated at the targeted estimate. As we will show in the following subsection, solving this empirical score equation is a key step for ensuring asymptotically valid statistical inference and constructing Wald-type confidence intervals.

TMLE has been developed as a general estimation framework for a wide range of data structures and target parameters, including point-treatment settings, time-to-event outcomes (\cite{moore_increasing_2009,cai_onestep_2020,rytgaard_continuous_2022,chen_beyond_2023}), longitudinal data (\cite{vdl_ltmle_2011,shirakawa_longitudinal_2024}), and network data (\cite{vdl_causal_2012,ogburn_causal_2024}) (also see other examples in the two books on TMLE, \cite{vdl_targeted_2011} and \cite{vdl_targeted_2018}). For two-phase sampling designs, an IPCW-TMLE has been proposed (\cite{rose_targeted_2011}), and will be reviewed in detail in Section \ref{sec:ipcw_tmle}.

\subsection{The exact remainder term}
The asymptotic efficiency proof of a TMLE estimator relies on a particular exact remainder term being $o_P(n^{-1/2})$. For a pathwise differentiable parameter $\Psi:\mathcal{M}\rightarrow \RR$, we define the exact remainder term $R(P,P_0)\coloneq\Psi(P)-\Psi(P_0)+P_0D_P$. For an asymptotically linear and efficient plug-in estimator (such as a TMLE) $\Psi(P_n^\star)$, we have the following expansion:
$$
\Psi(P_n^\star)-\Psi(P_0)=P_nD_{P_0}+o_P(n^{-1/2}).
$$
Define the total remainder $R_n(P_n^\star,P_0)\coloneq\Psi(P_n^\star)-\Psi(P_0)-P_nD_{P_0}$. Note that we have
\begin{align*}
R_n(P_n^\star,P_0)&=\Psi(P_n^\star)-\Psi(P_0)+P_0D_{P_n^\star}+(P_n-P_0)\{D_{P_n^\star}-D_{P_0}\}-P_nD_{P_n^\star}\\
&=\underbrace{R(P_n^\star,P_0)}_{\text{exact remainder}}+\underbrace{(P_n-P_0)\{D_{P_n^\star}-D_{P_0}\}}_{\text{empirical process}}-\underbrace{P_nD_{P_n^\star}}_{\text{plug-in bias}}.
\end{align*}
To control the empirical process term, assume that $D_{P_n^\star}$ falls in a Donsker class with probability tending to one, and that $P_0\{D_{P_n^\star}-D_{P_0}\}^2$ converges in probability to zero, then by an asymptotic equicontinuity theorem from empirical process theory, $(P_n-P_0)\{D_{P_n^\star}-D_{P_0}\}=o_P(n^{-1/2})$ (\cite{vaart_weak_1997}). The empirical process term is generally easy to control as it does not require a particular convergence rate. Suppose $\Psi(P_n^\star)$ is a TMLE, then it satisfies $P_nD_{P_n^\star}=0$ (or $o_P(n^{-1/2})$) so the plug-in bias term is gone \citep{vdl_targeted_2006}. Therefore, all that is left to achieve asymptotic efficiency is to control the exact remainder term so that it is $o_P(n^{-1/2})$.
\begin{remark}
The Donsker class condition ensures that the initial estimator does not overfit the data, thereby leaving sufficient signal for the TMLE update. To avoid relying on this condition, one may instead use the cross-validated TMLE (CV-TMLE) \citep{zheng_asymptotic_2010}.
\end{remark}

The exact remainder term often involves second-order differences between the estimated and true components of the data-generating distribution. For example, in the case of the ATE parameter, the exact remainder takes the form of an integral involving the product of two approximation errors: one for the outcome regression and the other for the treatment mechanism. To ensure that the exact remainder vanishes at the desired rate, each component must converge to its true value at a rate faster than $n^{-1/4}$, or any product of rates that would result in $R(P_n^\star,P_0)$ being $o_P(n^{-1/2})$. This form of double robustness is desirable, as it allows for consistency (i.e., $R(P_n^\star, P_0)=0$) even when one component is misspecified, provided the other is correctly specified (e.g., if treatment assignment is randomized or is known, then it is consistent even with misspecification of the outcome regression). The exact remainder also motivates the use of super learner to data-adaptively select among a library of candidate estimators, rather than relying on a single algorithm. In particular, including learners with minimal model assumptions yet provably fast convergence rates, such as the highly adaptive lasso (\cite{vdl_generally_2017,benkeser_hal_2016}), can endow the super learner with favorable theoretical guarantees. Importantly, super learner is not restricted to machine learning algorithms: in settings where partial knowledge of the data-generating process is available (e.g., a known treatment mechanism), one may include both parametric and nonparametric models in the library to reflect such knowledge.

\section{Review of existing estimators for two-phase sampling designs}\label{sec:review}
Methods for analyzing data from two-phase sampling designs include multiple imputation (MI) (\cite{rubin_multiple_1978}), estimating equations-based methods (e.g., augmented inverse of probability censoring weighted estimator (A-IPCW)) (\cite{vdl_unified_2003}), generalized raking (GR) (\cite{deville_calibration_1992, breslow_improved_2009, breslow_using_2009}), and inverse probability of censoring weighted TMLE (IPCW-TMLE) (\cite{rose_targeted_2011}). We refer readers to \cite{williamson_assessing_2026} for a review of these estimators. In this section, we focus our discussion on methods that performed well in simulation studies conducted by \cite{williamson_assessing_2026}.

\subsection{Inverse probability of censoring weighted targeted maximum likelihood estimator (IPCW-TMLE)}\label{sec:ipcw_tmle}
The IPCW-TMLE estimator proposed in \cite{rose_targeted_2011} works as follows. First, obtain an estimator $\Pi_n$ of the phase-2 sampling mechanism $\Pi_0$. Next, estimate the full-data outcome regression $Q_0$ and treatment mechanism $g_0$ by fitting initial estimators $Q_n$ and $g_n$, respectively, using appropriate full-data loss functions weighted by $\Delta/\Pi_n$. Then, the initial estimator $Q_n$ is targeted by fitting a univariate logistic regression with outcome $Y$, offset $\text{logit}\ Q_n$, clever covariate $H_n(A,W)=A/{g_n(1\mid W)}-(1-A)/g_n(0\mid W)$, and weights $\Delta/\Pi_n$. Note that for bounded continuous outcome $Y$, one may scale it to $(0,1)$. This targeting step yields the updated estimator $Q_n^\star$, which solves the equation $P_n\Delta/\Pi_nD^F_{P_n^\star}=0$ at $P_n^\star=(Q_n^\star,\Pi_n,g_n)$, where
$$
D^F_{P_n^\star}\coloneq H_n(A,W)(Y-Q_n^\star(A,W))+Q_n^\star(1,W)-Q_n^\star(0,W)-\Psi(Q_n^\star,P_{W,n})
$$
is the full-data EIC, and $P_{W,n}$ is the empirical distribution of $W$ that puts mass $1/n$ on each observation with weights $\Delta/\Pi_n$. The IPCW-TMLE estimator is summarized in Algorithm \ref{alg:ipcw_tmle}. 
\begin{algorithm}
\caption{IPCW-TMLE estimator}
\begin{algorithmic}[1]\label{alg:ipcw_tmle}
\STATE Estimate $\Pi_n$ using phase-1 and phase-2 data;
\STATE Estimate $Q_n$ and $g_n$ using phase-2 data with weights $\Delta/\Pi_n$;
\STATE Fit a univariate logistic regression with outcome $Y$, offset $\text{logit }Q_n(A,W)$, covariate
$$
H_n(A,W) = \frac{A}{g_n(1 \mid W)} - \frac{1 - A}{g_n(0 \mid W)},
$$
and weight $\Delta/\Pi_n(V)$;
\STATE Update $Q_n$:
\[
\text{logit }Q_n^\star(A,W) \coloneq \text{logit }Q_n(A,W) + \epsilon_n H_n(A,W),
\]
where $\epsilon_n$ is the estimated coefficient from the univariate logistic regression;
\STATE Compute final estimate:
$$
\Psi(P_n^\star) = \frac{1}{n} \sum_{i=1}^n \frac{\Delta_i}{\Pi_n(V_i)} \left( Q_n^\star(1, W_i) - Q_n^\star(0, W_i) \right).
$$
\end{algorithmic}
\end{algorithm}
Note that if the true phase-2 sampling mechanism $\Pi_0$ is assumed to be known, then the $\Pi$-component of the EIC is zero. This setting arises in applications where the phase-2 sampling mechanism is pre-specified by design and therefore known exactly. The next subsection discusses an extension of this estimator to the case where $\Pi_0$ is unknown and must be estimated. Even when $\Pi_0$ is known, targeting it can still lead to efficiency gains (\cite{vdl_unified_2003}).

\subsection{IPCW-TMLE with targeted phase-2 sampling mechanism}\label{sec:ipcw_tmle_target_Pi_local}
Although in the original IPCW-TMLE paper (\cite{rose_targeted_2011}) the authors discussed targeting the phase-2 sampling mechanism, this step was not implemented in their simulations or in the subsequent accompanying \texttt{R} package \texttt{twoStageDesignTMLE} (\cite{gruber_twostage_2024}). We implement the targeting step for $\Pi$ and include it as one of the candidate estimators in our simulation study. We now review this procedure.

Notice that in the original IPCW-TMLE (Algorithm \ref{alg:ipcw_tmle}), the targeting of $Q$ depends on $\Pi$ (due to the weighted targeting), while the regression function $E(D^F \mid \Delta = 1, V)$ appears in the clever covariate one would use for the targeting of $\Pi$, where $D^F$ itself depends on $Q$. Consequently, the additional targeting of $\Pi$ requires an iterative TMLE that alternates between the targeting of $Q$ and the targeting of $\Pi$. The IPCW-TMLE estimator with a targeted phase-2 sampling mechanism proceeds as follows. First, similar to Algorithm \ref{alg:ipcw_tmle}, an initial estimator $\Pi_n^{(0)}$ of $\Pi_0$ is obtained and used to fit initial estimators $Q_n^{(0)}$ and $g_n$ via weighted loss functions with weights $\Delta/\Pi_n^{(0)}$. Next, $Q_n^{(0)}$ is targeted by fitting a univariate logistic regression with outcome $Y$, offset $\text{logit}\ Q_n^{(0)}$, clever covariate $H_n(A, W)$, and weights $\Delta/\Pi_n^{(0)}$. We then obtain $Q_n^{(1)}$, i.e., the targeted $Q_n^{(0)}$:
$$
\text{logit }Q_n^{(1)}(A,W)\coloneq\text{logit }Q_n^{(0)}(A,W)+\epsilon_n^{(0)}H_n(A,W),
$$
where $\epsilon_n^{(0)}$ is the estimated fluctuation coefficient of the logistic regression fit. Given the targeted $Q_n^{(1)}$, the corresponding full-data EIC is computed as
$$
D^{F,(1)}_n \coloneq H_n(A,W)(Y - Q_n^{(1)}(A,W)) + Q_n^{(1)}(1,W) - Q_n^{(1)}(0,W) - \Psi(Q_n^{(1)}, P_{W,n}^{(0)}),
$$
where $P_{W,n}^{(0)}$ denotes the empirical distribution of $W$, assigning mass $1/n$ to each observation, but with weights $\Delta/\Pi_n^{(0)}$. Then, $\Pi_n^{(0)}$ is targeted by fitting another univariate logistic regression with outcome $\Delta$, offset $\text{logit}\ \Pi_n^{(0)}$ and clever covariate $C_n^{(0)}(V) \coloneq E_n(D^{F,(1)}_n \mid \Delta = 1, V) / \Pi_n^{(0)}$. We then obtain $\Pi_n^{(1)}$, i.e., the targeted $\Pi_n^{(0)}$:
$$
\text{logit }\Pi_n^{(1)}(V)\coloneq\text{logit }\Pi_n^{(0)}(V)+\delta_n^{(0)}C_n^{(0)}(V),
$$
where $\delta_n^{(0)}$ is the estimated fluctuation coefficient of the logistic regression fit. Given the updated $\Pi_n^{(1)}$, one then targets $Q_n^{(1)}$ using the new $\Pi_n^{(1)}$ to obtain $Q_n^{(2)}$. This iterative procedure alternates between targeting $Q$ and targeting $\Pi$ until the absolute value of the empirical mean of the EIC falls below the threshold $\sigma_n / (\sqrt{n} \log n)$, where $\sigma_n^2$ is the estimated variance of the EIC. This is a commonly used convergence criterion in the TMLE literature \citep{rytgaard_continuous_2022}. The complete procedure is summarized in Algorithm \ref{alg:ipcw_tmle_target_Pi}.

\begin{algorithm}
\caption{IPCW-TMLE estimator with targeted phase-2 sampling mechanism}
\begin{algorithmic}[1]\label{alg:ipcw_tmle_target_Pi}
\STATE Obtain initial estimator $\Pi_n^{(0)}$ for $\Pi_0$ using phase-1 and phase-2 data;
\STATE Obtain initial estimator $Q_n^{(0)}$ for $Q_0$ and $g_n$ for $g_0$ using phase-2 data with weights $\Delta/\Pi_n$;
\STATE Define convergence threshold $s_n^{(0)}\coloneq\sigma_n^{(0)}/(\sqrt{n}\cdot\log n)$, where $\sigma_n^{(0)}$ is the estimated variance of the efficient influence function at $P_n^{(0)}=(Q_n^{(0)},\Pi_n^{(0)},g_n)$;
\STATE Set iteration counter $k\coloneq 0$;
\WHILE{$|P_nD_{P_n^{(0)}}| > s_n^{(k)}$}
\STATE Fit a univariate logistic regression with outcome $Y$, offset $\text{logit }Q_n^{(k)}(A,W)$, covariate
\[
H_n(A,W) = \frac{A}{g_n(1\mid W)} - \frac{1 - A}{g_n(0\mid W)},
\]
and weight $\Delta/\Pi_n^{(k)}(V)$;
\STATE Update $Q_n^{(k)}$:
\[
\text{logit }Q_n^{(k+1)}(A,W) \coloneq \text{logit }Q_n^{(k)}(A,W) + \epsilon^{(k)}_n H_n(A,W),
\]
where $\epsilon_n^{(k)}$ is the estimated coefficient from the univariate logistic regression;
\STATE Fit a univariate logistic regression with outcome $\Delta$, offset $\text{logit }\Pi_{n}^{(k)}(V)$, and covariate
\[
C_n^{(k)}(V) = \frac{E_n(D^{F,(k+1)}_{n} \mid \Delta=1, V)}{\Pi_{n}^{(k)}(V)};
\]
\STATE Update $\Pi_n^{(k)}$:
\[
\text{logit }\Pi_n^{(k+1)}(V) \coloneq \text{logit }\Pi_n^{(k)}(V) + \delta_n^{(k)} C_n^{(k)}(V),
\]
where $\delta_n^{(k)}$ is the estimated coefficient from the univariate logistic regression;
\STATE Update convergence threshold $s_n^{(k+1)}\coloneq\sigma_n^{{(k+1)}}/(\sqrt{n}\cdot\log n)$, where $\sigma_n^{(k+1)}$ is the estimated variance of the efficient influence function at $P_n^{(k+1)}=(Q_n^{(k+1)},\Pi_n^{(k+1)},g_n)$;
\STATE Increase iteration counter $k$ by 1;
\ENDWHILE
\STATE Compute final estimate:
\[
\Psi(P_n^\star) = \frac{1}{n} \sum_{i=1}^n \frac{\Delta_i}{\Pi_n^{(k+1)}(V_i)} \left( Q_n^{(k+1)}(1, W_i) - Q_n^{(k+1)}(0, W_i) \right).
\]
\end{algorithmic}
\end{algorithm}

\begin{remark}\label{rem:linearization}
One may be concerned that such an iterative TMLE could be computationally intensive due to having to re-fit the regression function $E(D^F\mid\Delta=1,V)$ at every iteration when $Q$ is updated. First, based on our experience, the number of iterations between targeting $Q$ and $\Pi$ is typically small (often fewer than 10, with the majority of the time converging in 1 or 2 steps). Actually, this computational concern does not apply when a linear fluctuation submodel for $Q$ is used. Specifically, under the linear fluctuation $Q_{n,\epsilon}(A,W) = Q_n(A,W)+\epsilon H_n(A,W)$, we can express the un-centered full-data EIC at $Q_{n,\epsilon}$ as $\bar{D}^F_{n,\epsilon} = \bar{D}^F_n + \epsilon G_n$, where $G_n = H_n(1, W) - H_n(0, W) - H_n^2(A,W)$. That is, the un-centered full-data EIC is linear in $\epsilon$ under the linear fluctuation of $Q$. Although we generally recommend the use of logistic fluctuation submodels for TMLE targeting, in this case, a linear submodel is helpful from a computational perspective because it allows us to apply the linearity of expectation to compute $E_n(D^F_{n,\epsilon} \mid \Delta = 1, V)$ without re-fitting the regression at each value of $\epsilon$. In particular, we have
$$
E_n(D^F_{n,\epsilon} \mid \Delta = 1, V) = E_n(\bar{D}^F_n \mid \Delta = 1, V) + \epsilon E_n(G_n \mid \Delta = 1, V) - \Psi(Q_{n,\epsilon}).
$$
This means that the two regressions $E(\bar{D}^F \mid \Delta = 1, V)$ and $E(G \mid \Delta = 1, V)$ only need to be computed once at the initial fit.

In the case of bounded continuous or binary outcomes, where one typically uses a logistic fluctuation submodel to respect the bounds on the outcome, the exact linearity of $\bar{D}^F_{n,\epsilon}$ in $\epsilon$ no longer holds. However, we may consider a local linearization. Rather than using the exact $\bar{D}^F_{n,\epsilon}$ implied by $Q_{n,\epsilon}$, we approximate it via a first-order Taylor expansion around $\epsilon = 0$. Define $J_P(A, W) = Q_P(A, W)(1 - Q_P(A, W))$, and note that the directional derivative of $Q_{n,\epsilon}$ at $\epsilon = 0$ is given by
\[
\dot{Q}_{n,\epsilon}(A, W) = \frac{\partial Q_{n,\epsilon}(A, W)}{\partial \epsilon} \bigg|_{\epsilon = 0} = J_n(A, W) H_n(A, W).
\]
Using this, the corresponding derivative of the un-centered full-data EIC is
\[
\dot{\bar{D}}^F_{n,\epsilon}(X) = J_n(1, W) H_n(1, W) - J_n(0, W) H_n(0, W) - J_n(A, W) H_n^2(A, W),
\]
at $\epsilon=0$. Thus, we obtain the approximation $\bar{D}^F_{n,\epsilon}\approx\bar{D}^F_n + \epsilon \dot{\bar{D}}^F_n$ and define the locally linearized approximation as $\bar{D}^{F,\text{local}}_{n,\epsilon} = \bar{D}^F_n + \epsilon \dot{\bar{D}}^F_n$. With this approximation, we can again exploit linearity of expectation to compute
\[
E_n(D^F_{n,\epsilon} \mid \Delta = 1, V) \approx E_n(\bar{D}^F_n \mid \Delta = 1, V) + \epsilon E_n(\dot{\bar{D}}^F_n \mid \Delta = 1, V) - \Psi(Q_{n,\epsilon}).
\]
As before, the two regression functions only need to be estimated once at the initial fit. We direct readers to \ref{sec:appendix_local_linearization} for simulations and a more detailed discussion on this technique.
\end{remark}

\subsection{Generalized raking (GR)}\label{sec:raking}
Generalized raking (GR) (\cite{deville_calibration_1992, breslow_improved_2009, breslow_using_2009}) is a procedure that, in this problem context, calibrates the inverse probability of censoring weights so that a set of empirical calibration equations are satisfied. This is achieved by solving the following convex program:
\begin{gather*}
\min_{a \in \RR^n} \sum_{i=1}^n \Delta_i \, d\left(\frac{a_i}{\Pi_n(V_i)}, \frac{1}{\Pi_n(V_i)}\right), \\
\text{subject to } \sum_{i=1}^n h(V_i)=\sum_{i=1}^n \frac{\Delta_ia_i}{\Pi_n(V_i)}h(V_i),
\end{gather*}
for some distance measure $d$ and function(s) $h$. A common choice for $d$ that ensures non-negative weights is $d(p,q)=p\log(p/q)-p+q$ (\cite{chen_optimal_2022}). Under this choice, the calibrated weights can be viewed as an exponential tilting of the initial weights in the direction of $h(V)$ (see Appendix A for details). If the goal is to solve the empirical mean of the nonparametric EIC for our target causal estimand, then we can take $h(V)=E(D^F\mid V)$. Under CAR, $E(D^F\mid V)=E(D^F\mid\Delta=1,V)$, so that using $h(V)=E(D^F\mid\Delta=1,V)$ in the raking constraints yield calibration equations that coincide with the score equation used in TMLE targeting of the phase-2 sampling mechanism $\Pi$. In this sense, this version of GR can be viewed as a particular way of implementing the same score equation condition that TMLE targeting of $\Pi$ attempts to achieve, but only when the raking variable is chosen to approximate the conditional expectation of the nonparametric full-data EIC.

The version of GR described here requires estimators for both $D^F$ and its conditional expectation $E(D^F\mid\Delta=1,V)$. To this end, the authors in \cite{williamson_assessing_2026} considered a multiple imputation and Monte Carlo approach. Specifically, they used MICE to generate $M=10$ imputed full-datasets, using only the covariates $V$ from the phase-2 individuals (hence the imputation model is only a function of $V$). For each imputed dataset $m$, they fit a parametric main-term regression model with covariates $(W,A)$ and outcome $Y$. The EIC (in the main-term parametric working model) for the coefficient on $A$ was then computed for each fit, and the average over the $M$ imputations was taken to obtain an estimator $E_n(D^F_n \mid\Delta=1,V)$.

The resulting calibrated weights from solving the optimization problem of raking define a modified sampling mechanism $\Pi_n^\star(V_i) = \Pi_n(V_i)/a_i$ that satisfies the score equation
$$
0=P_n\left\{\frac{\Delta}{\Pi_n^\star(V)} E_n(D_n^F \mid \Delta = 1, V) - E_n(D_n^F \mid\Delta = 1, V)\right\},
$$
corresponding to solving the empirical mean of the $\Pi$-component of the EIC. Given these calibrated weights, a weighted regression is then used to fit the outcome model $Q$ using weights $\Delta/\Pi_n^\star(V)$. The estimated ATE is then computed via $g$-computation by averaging the predicted outcomes under $A=1$ and $A=0$, and taking their difference (\cite{robins_gcomp_1986}).

\begin{remark}\label{rem:raking}
The raking estimator, when paired with a weighted parametric regression, targets the so-called census estimand, which is defined as the value of the causal parameter evaluated under the parametric working model for $Q$ and is generally difficult to interpret causally when the working model is misspecified. In contrast, the nonparametric causal estimand is defined independently of any parametric restrictions on $Q$. Estimators such as TMLE are explicitly constructed to target this nonparametrically defined estimand. In general, the census estimand coincides with the nonparametric causal estimand only when the parametric working model is correctly specified. For the raking approach to consistently estimate the nonparametric causal estimand, both the phase-2 sampling mechanism $\Pi_0$ and the outcome regression $Q_0$ must be consistently estimated. 

To illustrate the practical consequences of estimand mismatch induced by outcome model misspecification, we conducted a simulation study to explicitly control the gap between the causal and census estimand. We generated data from a point-treatment setting with four baseline covariates $W=(W_1,W_2,W_3,W_4)$, each independently distributed as $N(1,1)$, a binary treatment $A$, and a continuous outcome $Y$. Treatment assignment followed a main-term logistic model $P(A=1\mid W)=\text{expit}\,(-0.2W_1-0.6W_2+0.2W_4)$, and the outcome model was given by $Y=Q(A,W)+U_Y$, where $U_Y$ is mean-zero noise and
\begin{align*}
Q(A,W)&=-0.3+0.4W_1-0.4W_2+0.2W_3-0.1W_4\\
&+\gamma A\left\{2.5I(W_2>1)-2.5I(W_2<0)+2\sin(W_1)\right\},
\end{align*}
where $\gamma\in\{0,0.2,0.4,0.6,0.8,1\}$ controls the magnitude of treatment effect heterogeneity and, consequently, the gap between the true average treatment effect and the census estimand induced by a misspecified parametric outcome model. The phase-2 sampling was generated by $P(\Delta=1\mid V)=\text{expit}\,(0.5W_1)$, which satisfies CAR. As shown in Figure \ref{fig:raking_cover}, when the gap between the causal estimand and the census estimand increases, the coverage of the raking estimator deteriorates quickly. Moreover, at a fixed gap between the causal estimand and census estimand, the coverage of the raking estimator tends to zero as the sample size increases.
\begin{figure}[ht]
\centering
\includegraphics[width=0.85\linewidth]{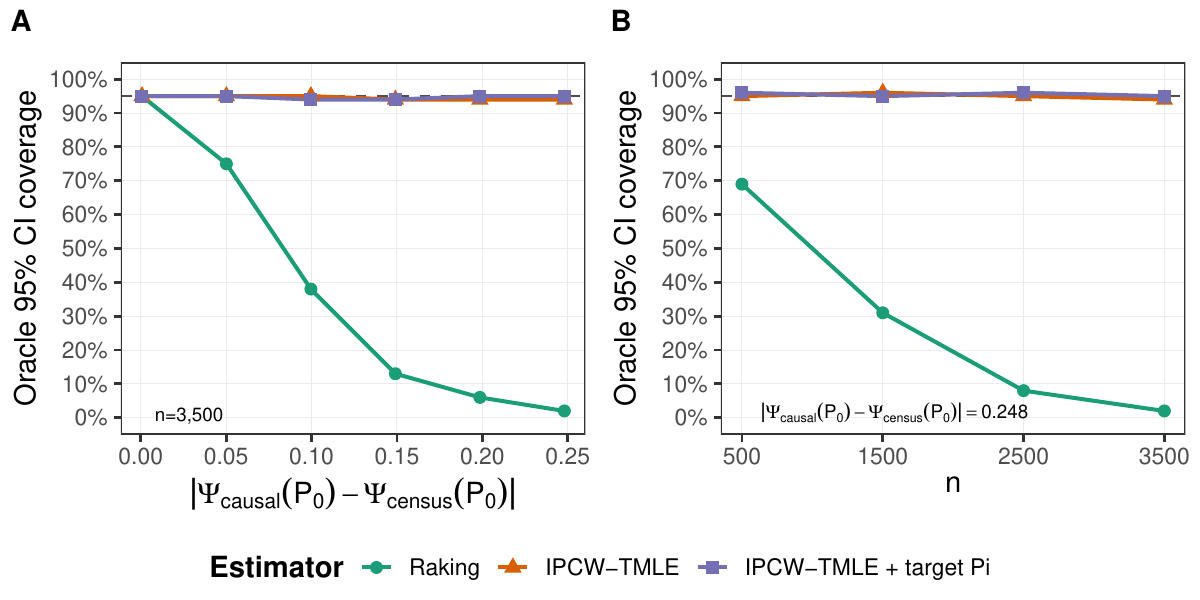}
\caption{Wald-type 95\% confidence interval oracle coverage of raking, IPCW-TMLE, and IPCW-TMLE with targeting of $\Pi$. Oracle coverage is defined as the proportion of Monte-Carlo runs (out of 1,000) where the 95\% confidence interval computed using the empirical standard error covers the true causal estimand.}
\label{fig:raking_cover}
\end{figure}

On the other hand, as shown in Table \ref{tab:raking_census_estimand}, when performance is evaluated with respect to the census estimand, the raking estimator exhibits good operating characteristics.
\begin{table}[ht]
\centering
\resizebox{16cm}{!}{
\begin{tabular}{ccccccc}
\toprule
Estimator & $n$ & $|\text{Bias}|$ ($\times 10^{-3}$) & SE ($\times 10^{-2}$) & MSE ($\times 10^{-3}$) & Coverage (\%) & Oracle Coverage (\%) \\
\midrule
\multirow{4}{*}{Raking}
 & 500  & 2.10 & 16.38 & 26.78 & 95 & 96 \\
 & 1500 & 10.41 & 9.53 & 9.17 & 95 & 95 \\
 & 2500 & 3.82 & 7.23 & 5.23 & 95 & 95 \\
 & 3500 & 5.21 & 6.23 & 3.91 & 95 & 95 \\
\bottomrule
\end{tabular}
}
\caption{Performance metrics of the raking estimator with respect to the census estimand when $|\Psi_{\text{causal}}(P_0)-\Psi_{\text{census}}(P_0)|=0.248$. ``SE'' is the empirical standard error. ``MSE'' is the mean squared error. ``Coverage'' is the proportion of runs in which the 95\% confidence interval computed using the analytic variance estimator covers the census estimand. ``Oracle coverage'' is the proportion of runs in which the 95\% confidence interval computed using the empirical standard error covers the census estimand.}
\label{tab:raking_census_estimand}
\end{table}
\end{remark}

Although the raking estimator implemented in \cite{williamson_assessing_2026} assumes a parametric working model for $Q$, the raking procedure itself does not rely on this assumption. It remains a valid alternative for solving the empirical mean of the $\Pi$-component of the EIC, distinct from the TMLE approach to target $\Pi$ via a one-dimensional submodel as discussed in Section \ref{sec:ipcw_tmle_target_Pi_local}. We implement this raking strategy within our TMLE framework (i.e., replacing the TMLE targeting of $\Pi$ in the estimator described in Section \ref{sec:ipcw_tmle_target_Pi_local} by the raking calibration of the weights) and observe no notable differences in performance in our simulations. A theoretical comparison between the two approaches is warranted but lies beyond the scope of this article.

We also identify three potential areas for improving the raking estimator by relaxing its parametric assumptions. First, the MICE and Monte Carlo strategy used in \cite{williamson_assessing_2026} to estimate the regression function is practically appealing, but it relies on both the MI model and the form of the regression model used in each imputed dataset being correct for consistent estimation with respect to the causal estimand. More flexible approaches such as MI based on random forests or xgboost could potentially be used to generate full-data imputations better. Likewise, more robust nonparametric regression methods, such as the highly adaptive lasso (HAL), could be used to estimate the regression function (\cite{vdl_generally_2017,benkeser_hal_2016}). Second, in estimating $D^F$, one could either use the nonparametric EIC directly or leverage the adaptive TMLE framework introduced in \cite{vdl_adaptive_2023}, which uses the EIC of a data-adaptive projection parameter as an approximation. This may yield greater stability in settings with near-positivity violations by avoiding inverse weighting of the treatment mechanism $g$. Third, when fitting the final weighted regression, one could again apply HAL to further enhance robustness to model misspecification. We conjecture that these three modifications could improve the procedure by making the resulting estimator more robust in the presence of realistic and complex true models. A detailed investigation of these potential improvements falls outside the scope of this article.

\section{Two estimators that rely on a slight rearrangement of the A-IPCW representation of the EIC}\label{sec:method_aipcw_rearrange}
The TMLE estimators discussed thus far (i.e., the IPCW-TMLE and the IPCW-TMLE with targeting of $\Pi$) were constructed based on the A-IPCW representation of the EIC. In this section, we introduce two new estimators that rely on a slight rearrangement of this representation. The first is the efficient estimating equations estimator (EEE), which solely aims to solve the empirical mean of the EIC. The second builds upon the first by additionally making it a plug-in estimator, thereby endowing it with the property of a TMLE. However, because its construction deviates somewhat from the general philosophy of TMLE, we refer to this second estimator as the Quasi-TMLE. 

Recall the A-IPCW representation of the EIC:
$$
D_P=\frac{\Delta D^F_P}{\Pi_P(V)}-\frac{E_P(D^F_P\mid\Delta=1,V)}{\Pi_P(V)}(\Delta-\Pi_P(V)),
$$
which, with slight rearrangement, can also be written as:
$$
D_P=\frac{\Delta}{\Pi_P(V)}\left(\bar{D}^F_P-E_P(\bar{D}^F_P\mid\Delta=1,V)\right)+E_P(\bar{D}^F_P\mid\Delta=1,V)-\Psi(P),
$$
where $\bar{D}^F_P=D^F_P+\Psi(P)$ is the un-centered the full-data EIC. Now, we are ready to introduce the two new estimators.

\subsection{Efficient estimating-equation (EEE) estimator}\label{sec:eee}
Suppose that the regression $E_n(\bar{D}^F_n\mid\Delta=1,V)$ is targeted using TMLE so that it solves the score equation
$$
0=P_n\left\{\frac{\Delta}{\Pi_n(V)}\left(\bar{D}^F_n-E_n(\bar{D}^F_n\mid\Delta=1,V)\right)\right\},
$$
and define an estimator as $\hat{\psi}^{\text{EEE}}\coloneq P_nE_n(\bar{D}^F_n(X)\mid\Delta=1,V)$, then it follows that $0=P_nD_n$. First, note that this estimator takes the form the empirical mean of the conditional expectation of the full-data EIC given the $\Delta=1$ and $V$. Since $V$ is observed for all individuals, the estimator is essentially imputing this conditional expectation for the entire sample and then averaging. Second, although this estimator incorporates a targeted estimate $E_n(\bar{D}^F_n\mid \Delta=1,V)$, which infuses it with some TMLE flavor, it is not yet a real TMLE. Specifically, it is not a plug-in estimator, as $\hat{\psi}^{\text{EEE}} \neq\Psi(Q_{n,\epsilon})$ for all $\epsilon$ in general. In other words, the EEE estimator cannot be viewed as taking the relevant factors (potentially targeted) of the data likelihood and plug in to the target parameter mapping $\Psi:\mathcal{M}\rightarrow\RR$. The construction of the subsequent Quasi-TMLE builds on this EEE estimator by modifying it into a plug-in estimator.

\subsection{Quasi-TMLE}\label{sec:quasi_tmle}
Consider a logistic fluctuation submodel for $Q_n$ given by
$$
\text{logit }Q_{n,\epsilon}(A,W)=\text{logit }Q_n(A,W)+\epsilon H_n(A,W).
$$
This logistic fluctuation submodel implies a submodel for $\bar{D}^F_n$ given by $\bar{D}_{n,\epsilon}^F\equiv\bar{D}_n^F(Q_{n,\epsilon})$. Define $\bar{m}_n=E_n(\bar{D}^F_n\mid\Delta=1,V)$ and  $\bar{m}_{n,\epsilon}=E_n(\bar{D}^F_{n,\epsilon}\mid\Delta=1,V)$. We define a one-dimensional submodel for $\bar{m}_{n,\epsilon}$ given by $\bar{m}_{n,\epsilon,\gamma}=\bar{m}_{n,\epsilon}+\gamma\Delta/\Pi_n$ with parameter $\gamma$. Recall that our goal is solve the empirical mean of the following EIC:
$$
D_P=\frac{\Delta}{\Pi_P(V)}\left(\bar{D}^F_P-E_P(\bar{D}^F_P\mid\Delta=1,V)\right)+E_P(\bar{D}^F_P\mid\Delta=1,V)-\Psi(P).
$$
Define our estimator as $\Psi(P_n^\star)=P_n\bar{m}_{n,\epsilon,\gamma}$. The system of equations we need to solve is then
\begin{align*}
\left\{
\begin{array}{ll}
0=P_n\{\Delta/\Pi_n(\bar{D}_{n,\epsilon}^F-\bar{m}_{n,\epsilon,\gamma})\}\\
0=P_n\bar{m}_{n,\epsilon,\gamma}-\Psi(Q_{n,\epsilon},P_{W,n})\quad\text{for all }\epsilon.
\end{array}
\right.
\end{align*}
where the first equation, together with the definition of our estimator, solves the empirical mean of the EIC. The second equation ensures that our estimator is a plug-in estimator.
\begin{remark}
To solve for $\epsilon$ and $\gamma$, one can first express $\gamma$ as a function of $\epsilon$, denoted $\gamma(\epsilon)$. Using the second equation, $\gamma(\epsilon)$ can be derived and substituted into the first equation to solve for $\epsilon$. This reduces the problem to a one-dimensional root-finding task, which can be solved using, for example, the secant method (\cite{papakonstantinou_origin_2013}). Interestingly, a similar TMLE that uses numerical methods to solve the empirical mean of the EIC was proposed in \cite{chaffee_targeted_2011}, which in their longitudinal data settings is a computationally appealing alternative to the standard TMLE.
\end{remark}

\section{A TMLE using an alternative representation of the target parameter}\label{sec:real_tmle}
We remind readers that the general strategy for constructing a TMLE involves three key steps: first, representing the target parameter in terms of nuisance functions; second, deriving the efficient influence curve of the parameter, which determines how each nuisance function should be targeted; and third, constructing a plug-in estimator. The estimators considered thus far deviate slightly from this general recipe, as they are all derived from the A-IPCW representation (or slight rearrangements thereof) of the EIC. While these approaches retain the spirit of TMLE by defining one-dimensional submodels of the initial estimators and estimating the fluctuation parameters that solve the empirical mean of the EIC, they do not strictly adhere to the full TMLE construction process. This then raises the question of whether it is possible to develop a TMLE for this problem setting that more closely follows the general recipe. To this end, we consider an alternative representation of the target parameter. In particular, recall that under coarsening at random, $X \perp \Delta \mid V$, which allows us to express the target parameter as
$$
\Psi(P)=E_{P_V}E_{P_{X\mid V}}(Q_{P_{X}}(W,1)-Q_{P_X}(W,0)\mid\Delta=1,V).
$$
Lemma \ref{lem:can_grad_of_psi_new} below presents the canonical gradient of this parameter.
\begin{lemma}\label{lem:can_grad_of_psi_new}
Define 
$$
H_{g,P_X}(A,W)=\frac{A}{g_{P_X}(1\mid W)}-\frac{1-A}{g_{P_X}(0\mid W)}
$$
The canonical gradient of the parameter $\Psi:\mathcal{M}\rightarrow\RR$ at $P$, where
$$
\Psi(P)=E_{P_V}E_{P_{X\mid V}}[Q_{P_X}(1,W)-Q_{P_X}(0,W)\mid\Delta=1,V]
$$
is given by
$$
D^\star_{\Psi,P}=D_{Q,P}+D_{\Pi,P}+D_{\Gamma,P}+D_{P_V,P},
$$
where
\begin{align*}
D_{Q,P}&=\frac{\Delta}{\Pi_P(V)}\left[H_{g,P_X}(A,W)(Y-Q_{P_X}(A,W))\right]\\
D_{\Pi,P}&=-\frac{\Delta-\Pi_P(V)}{\Pi_P(V)}E\left[H_{g,P_X}(A,W)(Y-Q_{P_X}(A,W))\mid\Delta=1,V\right],\\
D_{\Gamma,P}&=\frac{\Delta}{\Pi_P(V)}\left[Q_{P_X}(1,W)-Q_{P_X}(0,W)-E_{P_{X\mid V}}[Q_{P_X}(1,W)-Q_{P_X}(0,W)\mid\Delta=1,V]\right],\\
D_{P_V,P}&=E_{P_{X\mid V}}[Q_{P_X}(1,W)-Q_{P_X}(0,W)\mid\Delta=1,V]-\Psi(P).
\end{align*}
\end{lemma}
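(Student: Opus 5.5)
The plan is to avoid a fresh pathwise-derivative computation. Instead I will show that $D^\star_{\Psi,P}$ is algebraically identical to the efficient influence curve $D_P$ already stated in the preliminaries, namely the A-IPCW form $\Delta D^F_P/\Pi_P(V)-E_P(D^F_P\mid\Delta=1,V)(\Delta-\Pi_P(V))/\Pi_P(V)$ obtained from the Robins--Rotnitzky mapping.

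\textbf{Step 1: the new representation defines the same parameter.} Under CAR, $X\perp\Delta\mid V$, so $E_{P_{X\mid V}}[B\mid\Delta=1,V]=E[B\mid V]$, where $B\coloneq Q_{P_X}(1,W)-Q_{P_X}(0,W)$. Taking $E_{P_V}$ and using the tower property then gives $\Psi(P)=E_{P_X}B=\Psi^F(P_X)$. Hence the parameter in the lemma coincides with $\Psi(P_{P_X,\Pi})=\Psi^F(P_X)$.

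\textbf{Step 2: uniqueness of the gradient.} The full-data model $\mathcal{M}^F$ is nonparametric and $\Pi$ is unrestricted apart from CAR and positivity. The observed-data model is therefore locally saturated: its tangent space is all of $L^2_0(P)$, so the gradient is unique and equals the canonical gradient. It thus suffices to show $D^\star_{\Psi,P}=D_P$ pointwise. This is the step I expect to need the most care. If one worries that the observed-data model is not saturated, the fallback is a direct computation. Factorize $p(O)=p_V(V)\,\Pi(V)^{\Delta}(1-\Pi(V))^{1-\Delta}\,\{p_{W_2\mid V}(W_2\mid V)\}^{\Delta}$, with $p_{W_2\mid V}$ identified from the $\Delta=1$ stratum. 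Then differentiate $\Psi$ along scores in each factor. The $P_V$-score yields $D_{P_V,P}$. The $\Gamma$-factor yields $D_{\Gamma,P}$ plus the weighted residual term $D_{Q,P}$, the latter via the usual differentiation of $Q$ and $g$ inside the conditional law of $(W,A,Y)$. Projecting onto the CAR tangent space yields $D_{\Pi,P}$. Either way, the final identity to verify is the same algebra as Step 3.

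\textbf{Step 3: algebra.} Write $R\coloneq H_{g,P_X}(A,W)(Y-Q_{P_X}(A,W))$, $m_R\coloneq E(R\mid\Delta=1,V)$ and $m_B\coloneq E(B\mid\Delta=1,V)$, so that $D^F_P=R+B-\Psi$ and $E(D^F_P\mid\Delta=1,V)=m_R+m_B-\Psi$. Then
\begin{align*}
D_P&=\frac{\Delta}{\Pi}(R+B-\Psi)-\frac{\Delta-\Pi}{\Pi}(m_R+m_B-\Psi)\\
&=\frac{\Delta}{\Pi}R-\frac{\Delta-\Pi}{\Pi}m_R+\frac{\Delta}{\Pi}(B-m_B)+m_B-\frac{\Delta}{\Pi}\Psi+\frac{\Delta-\Pi}{\Pi}\Psi .
\end{align*}
The last two terms sum to $-\Psi$. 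Reading off the remaining pieces gives exactly $D_{Q,P}+D_{\Pi,P}+D_{\Gamma,P}+D_{P_V,P}$.

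Finally, I will note that each component is a score in its own factor, which justifies the labelling used for targeting. $D_{Q,P}$ and $D_{\Gamma,P}$ have conditional mean zero given $(\Delta=1,V)$ and so lie in the $\Gamma$ tangent space. $D_{\Pi,P}$ is a function of $V$ times $(\Delta-\Pi)$. $D_{P_V,P}$ is a mean-zero function of $V$.
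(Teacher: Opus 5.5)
Your main argument is correct, but it takes a different route from the paper. The paper computes the gradient directly. For the treated arm it holds $p(w)$ fixed and writes the parameter (binary $Y$) as $\int P(Y=1,w,A=1)/P(w,A=1)\,dP(w)$. It applies the A-IPCW map to the EICs of the two cell probabilities and combines them by the delta method, which gives $D_{Q,P}+D_{\Pi,P}$. It then gets $D_{\Gamma,P}$ from the delta method on $E(\Delta Q(1,W)\mid V)/P(\Delta=1\mid V)$ with $Q$ and $P_V$ fixed, adds the $P_V$ term, and only afterwards remarks that the sum equals the A-IPCW form.

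You prove that remark directly instead. Step 1 identifies the new mapping with $\Psi^F(P_X)$ on the CAR model. Step 3 is a correct regrouping (the two $\Psi$ terms do collapse to $-\Psi$). Canonicity is then inherited from the Robins--Rotnitzky representation quoted in Section 2. The paper's derivation also invokes the A-IPCW map, so neither route is independent of that result. Yours is shorter, avoids the binary-$Y$/discrete-$W$ device, and makes the subsequent Remark immediate. The paper's shows which perturbation produces each term, which is what motivates the separate targeting steps of Algorithm 3. Your Step 2 is harmless but unnecessary: pointwise equality with a function already known to be the canonical gradient suffices without any uniqueness argument.

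One side claim in your closing paragraph is false, though the lemma does not use it. $D_{Q,P}$ does not have conditional mean zero given $(\Delta=1,V)$. In your notation $E[D_{Q,P}\mid\Delta=1,V]=m_R/\Pi_P(V)$, and $m_R$ is generally nonzero. The reason is that $V$ contains $Y$, so $H_{g,P_X}(A,W)(Y-Q_{P_X}(A,W))$ is not conditionally centered given $V$.

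The correct orthogonal split is $D_{Q,P}+D_{\Pi,P}=\frac{\Delta}{\Pi_P(V)}(R-m_R)+m_R$: a $\Gamma$-score plus a mean-zero function of $V$. The $\mathcal{T}(\text{CAR})$ parts of $D_{Q,P}$ and $D_{\Pi,P}$ cancel, as they must, since $\Psi$ does not vary with $\Pi$ under CAR.

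The same issue affects your fallback sketch. $Q$ is not its own factor in $p_V\cdot\Pi\cdot p_{W_2\mid V}$; it depends on both $p_V$ and $p_{W_2\mid V}$. So the $P_V$-direction of the gradient is $m_R+m_B-\Psi$, not $D_{P_V,P}$. The labels $D_{Q,P}$ and $D_{\Pi,P}$ are meaningful as the score equations solved by the corresponding targeting steps, not as tangent-space components.
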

The derivation is provided in Appendix C. 
\begin{remark}
The canonical gradient in Lemma \ref{lem:can_grad_of_psi_new} is equal to the A-IPCW representation of the EIC presented earlier. In other words, the EIC remains the same, and Lemma \ref{lem:can_grad_of_psi_new} simply provides an alternative representation of it.
\end{remark}

This canonical gradient immediately suggests the following TMLE procedure: first, obtain a targeted estimator $Q_n^{(1)}$ for $Q_n^{(0)}=Q_n$ such that the empirical mean of the $Q$-component of the EIC is solved; next, obtain a targeted estimator $\Pi_n^{(1)}$ for $\Pi_n^{(0)}=\Pi_n$ so that the empirical mean of the $\Pi$-component of the EIC is solved. Since $Q_n^{(1)}$ is targeted at the initial $\Pi_n^{(0)}$, one may then re-target $Q$ using the updated $\Pi_n^{(1)}$, and iterate this process $K$ times until convergence. Finally, obtain a targeted estimator $E_{\Gamma_n}^\star[Q_n^{(K)}(1,W)-Q_n^{(K)}(0,W)\mid \Delta=1,V]$ for $E_{\Gamma_n}[Q_n^{(K)}(1,W)-Q_n^{(K)}(0,W)\mid \Delta=1,V]$ such that the empirical mean of the $\Gamma$-component of the EIC is solved. Using the empirical distribution of $V$ as an estimator for $P_V$, then the empirical mean of the $P_V$-component of EIC is solved. This TMLE is summarized in Algorithm \ref{alg:real_tmle}.

\begin{algorithm}
\caption{TMLE estimator}
\begin{algorithmic}[1]\label{alg:real_tmle}
\STATE Obtain initial estimator $\Pi_n^{(0)}$ for $\Pi_0$;
\STATE Obtain initial estimators $Q_n^{(0)}$ for $Q_0$ and $g_n$ for $g_0$ with weights $\Delta/\Pi_n$ in losses; 
\STATE Define convergence threshold $s_n\coloneq\sigma_n^{(0)}/(\sqrt{n}\cdot\log n)$, where $\sigma_n^{(0)}$ is the estimated variance of the $Q$ and $\Pi$-score components of the efficient influence function at $P_n^{(0)}=(Q_n^{(0)},\Pi_n^{(0)},g_n)$; 
\STATE Set iteration counter $k \coloneq 0$;
\WHILE{$\big|P_nD_{Q,P_n^{(0)}}+P_nD_{\Pi,P_n^{(0)}}\big|>s_n$}
\STATE Fit a univariate logistic regression with outcome $Y$, offset $\text{logit }Q_n^{(k)}(A,W)$, covariate
$$
H_{g_n}(A,W)=\frac{A}{g_n(1\mid W)} - \frac{1-A}{g_n(0\mid W)},
$$
and weight $\Delta/\Pi_n^{(k)}(V)$;
\STATE Update $Q_n^{(k)}$:
$$
\text{logit}\,Q_n^{(k+1)}(A,W) \coloneq \text{logit}\,Q_n^{(k)}(A,W) +\epsilon^{(k)}_n H_{g_n}(A,W),
$$
where $\epsilon_n^{(k)}$ is the estimated coefficient from the univariate logistic regression;
\STATE Fit a univariate logistic regression with outcome $\Delta$, offset $\text{logit}\,\Pi_{n}^{(k)}(V)$, and covariate
$$
\tilde{C}_n^{(k)}(V) = \frac{E_{\Gamma_n}[H_{g_n}(A,W)(Y-Q_n^{(k+1)}(A,W))\mid\Delta=1,V)]}{\Pi_{n}^{(k)}(V)};
$$
\STATE Update $\Pi_n^{(k)}$:
\[
\text{logit }\Pi_n^{(k+1)}(V) \coloneq \text{logit }\Pi_n^{(k)}(V) + \delta_n^{(k)} \tilde{C}_n^{(k)}(V),
\]
where $\delta_n^{(k)}$ is the estimated coefficient from the univariate logistic regression;
\STATE Update convergence threshold $s_n\coloneq\sigma_n^{{(k+1)}}/(\sqrt{n}\cdot\log n)$, where $\sigma_n^{(k+1)}$ is the estimated variance of the efficient influence function at $P_n^{(k+1)}=(Q_n^{(k+1)},\Pi_n^{(k+1)},g_n)$;
\STATE Increase iteration counter $k$ by 1;
\ENDWHILE
\STATE Fit a univariate logistic regression with outcome $Q_n^{(K)}(1,W)$, offset $\text{logit}\,E_{\Gamma_n}[Q_n^{(K)}(1,W)\mid\Delta=1,V]$, covariate $1/\Pi_{n}^{(K)}(V)$, on individuals in phase-2;
\STATE Update $E_{\Gamma_n}[Q_n^{(K)}(1,W)\mid\Delta=1,V]$:
$$
\text{logit}\,E^\star_{\Gamma_n}[Q_n^{(K)}(1,W)\mid\Delta=1,V]\coloneq\text{logit}\,E_{\Gamma_n}[Q_n^{(K)}(1,W)\mid\Delta=1,V]+\gamma_{1,n}\Delta/\Pi_{n}^{(K)}(V)
$$
\STATE Fit a univariate logistic regression with outcome $Q_n^{(K)}(0,W)$, offset $\text{logit}\,E_{\Gamma_n}[Q_n^{(K)}(0,W)\mid\Delta=1,V]$, covariate $1/\Pi_{n}^{(K)}(V)$, on individuals in phase-2;
\STATE Update $E_{\Gamma_n}[Q_n^{(K)}(0,W)\mid\Delta=1,V]$:
$$
\text{logit}\,E^\star_{\Gamma_n}[Q_n^{(K)}(0,W)\mid\Delta=1,V]\coloneq\text{logit}\,E_{\Gamma_n}[Q_n^{(K)}(0,W)\mid\Delta=1,V]+\gamma_{0,n}\Delta/\Pi_{n}^{(K)}(V)
$$
\STATE Define $P_n^\star=(Q_n^{(K)},\Gamma_n^\star,P_{V,n})$, where $P_{V,n}$ is the empirical estimator of $P_V$ assigning mass $1/n$ on each individual, and compute plug-in estimator:
$$
\Psi(P_n^\star)=E_{P_{V,n}} E_{\Gamma_n^\star}[ Q_n^{(K)}(1,W) - Q_n^{(K)}(0,W) \mid \Delta=1,V].
$$
\end{algorithmic}
\end{algorithm}

\section{Analyzing the exact remainder terms}\label{sec:analyze_exact_rem}
As we discussed in Section \ref{sec:preliminary}, understanding the exact remainder term is the key to understanding the estimator's asymptotic behavior. In this section, we derive and study the robustness structure of the exact remainder.

\subsection{Robustness structure of the exact remainder term}
Define the exact remainder
$$
R(P,P_0)=\Psi(P)-\Psi(P_0)+P_0D_{P}.
$$
By definition and iterated expectations, we have
\begin{align*}
R(P,P_0)&=P_0\left\{
\frac{\Delta}{\Pi_P(V)}(\bar{D}^F_{P}-E_{P}(\bar{D}^F_{P}\mid\Delta=1,V)+E_{P}(\bar{D}^F_{P}\mid\Delta=1,V)\right\}-\Psi(P_0)\\
&=P_0\left\{\frac{\Pi_0}{\Pi_P}(V)\left[E_{P_0}(\bar{D}^F_{P}\mid\Delta=1,V)-E_{P}(\bar{D}^{F}_{P}\mid\Delta=1,V)\right]+E_{P}(\bar{D}^F_{P}\mid\Delta=1,V)\right\}-\Psi(P_0)\\
&=P_0\left\{\frac{\Pi_P-\Pi_0}{\Pi_P}(V)\left[E_{P}(\bar{D}^F_{P}\mid\Delta=1,V)-E_{P_0}(\bar{D}^{F}_{P}\mid\Delta=1,V)\right]+E_{P_0}(\bar{D}^F_{P}\mid\Delta=1,V)\right\}-\Psi(P_0).
\end{align*}
Let $R^F(P,P_0)$ be the exact remainder of the ATE parameter in the full-data world. Note that
\begin{align*}
R^F(P,P_0)&=\Psi(P)-\Psi(P_0)+P_0D^F_{P}\\
&=P_0\left\{D^F_{P}+\Psi(P)\right\}-\Psi(P_0)\\
&=P_0\left\{\bar{D}^F_{P}\right\}-\Psi(P_0)=P_0\left\{E_{P_0}(\bar{D}^F_{P}\mid\Delta=1,V)\right\}-\Psi(P_0).
\end{align*}
Therefore, we have that
$$
R(P,P_0)=P_0\left\{\frac{\Pi_P-\Pi_0}{\Pi_P}(V)\left[E_{P}(\bar{D}^F_{P}\mid\Delta=1,V)-E_{P_0}(\bar{D}^{F}_{P}\mid\Delta=1,V)\right]\right\}+R^F(P,P_0),
$$
where
$$
R^F(P,P_0)=P_0\left\{\frac{g_P-g_0}{g_P}(1\mid W)(Q_P-Q_0)(1,W)-\frac{g_P-g_0}{g_P}(0\mid W)(Q_P-Q_0)(0,W)\right\}.
$$
Hence, the exact remainder contains second-order terms $(\Pi_P - \Pi_0)\left[E_{P}(\bar{D}^F_{P}\mid\Delta=1,V)-E_{P_0}(\bar{D}^{F}_{P}\mid\Delta=1,V)\right]$ and $(g_P - g_0)(Q_P - Q_0)$. We can use this representation of the exact remainder to study the asymptotic property of IPCW-TMLE with targeting of $\Pi$, the EEE estimator, and Quasi-TMLE. Specifically, the asymptotic efficiency of these three estimators depends on each of these second-order terms being $o_P(n^{-1/2})$. For instance, it is permissible for the estimator of the regression function $E(\bar{D}^F \mid \Delta = 1, V)$ to converge at a rate slower than $o_P(n^{-1/4})$, provided that the estimator of $\Pi$ converges faster than $o_P(n^{-1/4})$, so that their product remains $o_P(n^{-1/2})$ overall. Now, we formally state this result in Lemma \ref{lem:exact_rem}.
\begin{lemma}\label{lem:exact_rem}
For the exact remainder
$$
R(P,P_0)=P_0\left\{\frac{\Pi_P-\Pi_0}{\Pi_P}(V)\left[E_{P}(\bar{D}^F_{P}\mid\Delta=1,V)-E_{0}(\bar{D}^{F}_{P}\mid\Delta=1,V)\right]\right\}+R^F(P,P_0),
$$
where
$$
R^F(P,P_0)=P_0\left\{\frac{g_P-g_0}{g_P}(1\mid W)(Q_P-Q_0)(1,W)-\frac{g_P-g_0}{g_P}(0\mid W)(Q_P-Q_0)(0,W)\right\},
$$
if $g_n\geq\mu_{g,n}>0$ and $\Pi_n\geq\mu_{\Pi,n}>0$ with probability tending to one, then 
\begin{align*}
|R(P_n,P_0)|&\leq \mu_{g,n}^{-1}\norm{\Pi_n-\Pi_0}_{P_0}\norm{E_{P_n}(\bar{D}^F_{P_n}\mid\Delta=1,V)-E_{P_0}(\bar{D}^{F}_{P_n}\mid\Delta=1,V)}_{P_0}\\
&+\mu_{\Pi,n}^{-1}\norm{g_n-g_0}_{P_0}\norm{Q_n-Q_0}_{P_0},
\end{align*}
by Cauchy-Schwarz, where $\norm{f}_{P_0}\equiv(P_0f^2)^{1/2}$. For example, suppose we have $\norm{\Pi_n-\Pi_0}_{P_0}=O_P(n^{-1/4-\alpha_1})$, $\norm{E_{P_n}(\bar{D}^F_{P_n}\mid\Delta=1,V)-E_{P_0}(\bar{D}^{F}_{P_n}\mid\Delta=1,V)}_{P_0}=O_P(n^{-1/4-\alpha_2})$, $\norm{g_n-g_0}_{P_0}=O_P(n^{-1/4-\alpha_3})$, and $\norm{Q_n-Q_0}_{P_0}=O_P(n^{-1/4-\alpha_4})$ for $\alpha_1,\alpha_2,\alpha_3,\alpha_4>0$, then $R(P_n,P_0)=O_P(\mu_{g,n}^{-1}n^{-1/2-\alpha_1-\alpha_2}+\mu_{\Pi,n}^{-1}n^{-1/2-\alpha_3-\alpha_4})$. Let $\mu_{g,n}=n^{-\beta_1}$ and $\mu_{\Pi,n}=n^{-\beta_2}$ with $\beta_1,\beta_2>0$. Then, we could pick $\beta_1<\alpha_1+\alpha_2$ and $\beta_2<\alpha_3+\alpha_4$ to make $R(P_n,P_0)=o_P(n^{-1/2})$.
\end{lemma}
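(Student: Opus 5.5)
The representation of $R(P,P_0)$ in the statement was already derived in the display just before the lemma. I therefore take it as given and only need to bound its two pieces separately and then read off the rates. Write $R(P_n,P_0)=R_\Pi(P_n,P_0)+R^F(P_n,P_0)$, where $R_\Pi$ is the $P_0$-integral of $(\Pi_n-\Pi_0)\Pi_n^{-1}(V)\,[E_{P_n}(\bar D^F_{P_n}\mid\Delta=1,V)-E_{0}(\bar D^F_{P_n}\mid\Delta=1,V)]$. By the triangle inequality, $|R|\le |R_\Pi|+|R^F|$.

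\emph{The $\Pi$-term.} On the event, with probability tending to one, that $\Pi_n\ge\mu_{\Pi,n}$, the factor $1/\Pi_n(V)$ is bounded pointwise by $\mu_{\Pi,n}^{-1}$. Hence $|R_\Pi|\le \mu_{\Pi,n}^{-1}P_0\bigl|(\Pi_n-\Pi_0)\,(E_{P_n}-E_{0})(\bar D^F_{P_n}\mid\Delta=1,V)\bigr|$. Cauchy--Schwarz in $L^2(P_0)$ then gives the product of the two norms.

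\emph{The full-data term.} $R^F$ consists of two summands, one for $a=1$ and one for $a=0$, each of the form $P_0\{(g_n-g_0)g_n^{-1}(a\mid W)(Q_n-Q_0)(a,W)\}$. On the event $g_n\ge\mu_{g,n}$, each summand is bounded by $\mu_{g,n}^{-1}\|g_n(a\mid\cdot)-g_0(a\mid\cdot)\|_{P_0}\|Q_n(a,\cdot)-Q_0(a,\cdot)\|_{P_0}$.

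For the norms on $g$ and $Q$, I will interpret $\|g_n-g_0\|_{P_0}$ and $\|Q_n-Q_0\|_{P_0}$ as the norms over the relevant arguments. Since $g(0\mid W)=1-g(1\mid W)$, the two $g$-differences coincide in absolute value. For the $Q$-differences, I pass from the norm at a fixed $a$ to the norm at the observed $(A,W)$. This step uses treatment positivity for $g_0$ and is absorbed into the constant, or alternatively one simply defines the norm as the maximum over $a\in\{0,1\}$. The resulting factor of 2 is absorbed into the constant.

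\emph{Pairing of the constants.} The analysis shows that the bound attaches $\mu_{\Pi,n}^{-1}$ to the $\Pi$-product and $\mu_{g,n}^{-1}$ to the $(g,Q)$-product. The display in the statement has these two constants interchanged. I would prove the correctly paired version and note that the conclusion is unaffected once $\beta_1$ and $\beta_2$ are relabeled accordingly. Alternatively, one can bound both terms by $\max(\mu_{g,n}^{-1},\mu_{\Pi,n}^{-1})$.

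\emph{Rates.} Substituting the assumed rates $O_P(n^{-1/4-\alpha_j})$ gives a product of $O_P$ terms, which is $O_P(n^{-1/2-\alpha_1-\alpha_2})$ and $O_P(n^{-1/2-\alpha_3-\alpha_4})$ respectively. Multiplying by $n^{\beta}$ with $\beta$ strictly smaller than the corresponding $\alpha$-sum yields $o_P(n^{-1/2})$. Because the bounds only hold on events of probability tending to one, the final statement is an $O_P$/$o_P$ statement, which is exactly what is required.

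The only real obstacle is bookkeeping rather than analysis: matching each truncation constant to the denominator it controls, and handling the $a$-specific norms in $R^F$. No step requires more than pointwise bounding of the denominators and Cauchy--Schwarz.
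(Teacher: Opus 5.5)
Your proof is correct and is the same argument the paper uses (the paper simply writes ``by Cauchy--Schwarz''): bound $1/\Pi_n$ and $1/g_n$ pointwise on the event of probability tending to one, apply Cauchy--Schwarz to each product, and read off the $O_P$ rates. You are also right that the displayed inequality pairs the constants the wrong way. The $\Pi$-product is controlled by $\mu_{\Pi,n}^{-1}$ and the $(g,Q)$-product by $\mu_{g,n}^{-1}$, up to a harmless factor from the two treatment arms; your max-over-$a$ norm is the cleaner way to handle that factor, since the paper only assumes pointwise positivity of $g_0$. The rate conditions should therefore read $\beta_2<\alpha_1+\alpha_2$ and $\beta_1<\alpha_3+\alpha_4$.
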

In particular, Lemma \ref{lem:exact_rem} suggests the following four scenarios where the exact remainder $R(P_n,P_0)=0$: (i) $\Pi_n=\Pi_0$ and $g_n=g_0$; (ii) $\Pi_n=\Pi_0$ and $Q_n=Q_0$; (iii) $\bar{m}_n=\bar{m}_0$ and $g_n=g_0$; (iv) $\bar{m}_n=\bar{m}_0$ and $Q_n=Q_0$. We note that the TMLE variants considered in this paper all use an IPCW-based loss to obtain initial estimators for $Q_0$ and $g_0$. Therefore, consistent estimation of $Q_0$ and $g_0$ generally requires consistent estimation of the phase-2 sampling mechanism $\Pi_0$. In Appendix D, we further describe how to construct an initial estimator that is doubly robust. That is, if either $\Pi$ or the regression function $E(L^F(Q)\mid \Delta=1,V)$, with $L^F(Q)$ denoting the full-data loss for $Q$, is correctly specified, then the resulting estimator for the full-data parameter $Q$ is consistent. This procedure also applies to the estimation of the full-data parameter $g$.

To study the asymptotic property of the TMLE estimator described in Section \ref{sec:real_tmle} that uses the mapping $P\to E_{P_V}E_{P\mid V}[Q_{P_X}(1,W)-Q_{P_X}(0,W)\mid\Delta=1,V]$, we need an alternative form of the exact remainder presented in Lemma \ref{lem:R2_new}.
\begin{lemma}\label{lem:R2_new}
Let 
$$
D^F_{Q,P}=\left[\frac{A}{g_P(1\mid W)}-\frac{1-A}{g_P(0\mid W)}\right](Y-Q_P(A,W))\quad\text{and}\quad
m_{P_1,P_2}^{(a)}=E_{P_1}[Q_{P_2}(a,W)\mid\Delta=1,V],
$$
for $a\in\{0,1\}$. The exact remainder of $P\to E_{P_V}E_{P_{X\mid V}}[Q_{P_X}(1,W)-Q_{P_X}(0,W)\mid\Delta=1,V]$ is given by
\begin{align*}
R(P,P_0)&=P_0\left\{\frac{\Pi_P-\Pi_0}{\Pi_P}(V)\left[E_{P}(D^F_{Q,P}\mid\Delta=1,V)-E_{0}(D_{Q,P}^F\mid\Delta=1,V)\right]\right\}\\
&+P_0\left\{\frac{g_P-g_0}{g_P}(1\mid W)(Q_P-Q_0)(1,W)-\frac{g_P-g_0}{g_P}(0\mid W)(Q_P-Q_0)(0,W)\right\}\\
&+P_0\left\{\frac{\Pi_P-\Pi_0}{\Pi_P}(V)\left(m_{P,P}^{(1)}-m_{P_0,P}^{(1)}\right)-\frac{\Pi_P-\Pi_0}{\Pi_P}(V)\left(m_{P,P}^{(0)}-m_{P_0,P}^{(0)}\right)\right\}.
\end{align*}
\end{lemma}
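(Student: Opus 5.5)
We compute $R(P,P_0)=\Psi(P)-\Psi(P_0)+P_0D^\star_{\Psi,P}$ directly, using the four-component representation of the canonical gradient from Lemma \ref{lem:can_grad_of_psi_new}, and group terms by nuisance factor. Write $\Psi(P)=P_{V}\{m^{(1)}_{P,P}-m^{(0)}_{P,P}\}$, where $P_V$ denotes the $V$-marginal of $P$. The component $D_{P_V,P}$ has $P_0$-mean $P_0(m^{(1)}_{P,P}-m^{(0)}_{P,P})-\Psi(P)$. Adding it to $\Psi(P)$ cancels $\Psi(P)$, which leaves
\[
R(P,P_0)=P_0\{m^{(1)}_{P,P}-m^{(0)}_{P,P}\}-\Psi(P_0)+P_0D_{Q,P}+P_0D_{\Pi,P}+P_0D_{\Gamma,P}.
\]

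\emph{Step 1 ($\Gamma$-component).} Condition on $V$ and use CAR, which gives $E_0(\Delta f(X)\mid V)=\Pi_0(V)E_0(f(X)\mid\Delta=1,V)$. Then
\[
P_0D_{\Gamma,P}=P_0\Bigl\{\tfrac{\Pi_0}{\Pi_P}\bigl[(m^{(1)}_{P_0,P}-m^{(0)}_{P_0,P})-(m^{(1)}_{P,P}-m^{(0)}_{P,P})\bigr]\Bigr\}.
\]
Adding $P_0\{m^{(1)}_{P,P}-m^{(0)}_{P,P}\}$ and writing $\Pi_0/\Pi_P=1-(\Pi_P-\Pi_0)/\Pi_P$ turns this into
\[
P_0\{m^{(1)}_{P_0,P}-m^{(0)}_{P_0,P}\}+P_0\Bigl\{\tfrac{\Pi_P-\Pi_0}{\Pi_P}\bigl[(m^{(1)}_{P,P}-m^{(1)}_{P_0,P})-(m^{(0)}_{P,P}-m^{(0)}_{P_0,P})\bigr]\Bigr\}.
\]
The second piece is exactly the third line of the claimed remainder.

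\emph{Step 2 ($Q$- and $\Pi$-components).} The same conditioning argument, applied to the sum $P_0D_{Q,P}+P_0D_{\Pi,P}$, mirrors the computation preceding Lemma \ref{lem:exact_rem} with $\bar D^F_P$ replaced by $D^F_{Q,P}$. It gives
\[
P_0\Bigl\{\tfrac{\Pi_0}{\Pi_P}E_0(D^F_{Q,P}\mid\Delta=1,V)-\tfrac{\Pi_0-\Pi_P}{\Pi_P}E_P(D^F_{Q,P}\mid\Delta=1,V)\Bigr\}.
\]
Writing $\Pi_0/\Pi_P=1-(\Pi_P-\Pi_0)/\Pi_P$ again, this becomes
\[
P_0E_0(D^F_{Q,P}\mid\Delta=1,V)+P_0\Bigl\{\tfrac{\Pi_P-\Pi_0}{\Pi_P}\bigl[E_P(D^F_{Q,P}\mid\Delta=1,V)-E_0(D^F_{Q,P}\mid\Delta=1,V)\bigr]\Bigr\},
\]
which yields the first line of the claim.

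\emph{Step 3 (full-data remainder).} What remains is
\[
P_0E_0(D^F_{Q,P}\mid\Delta=1,V)+P_0\{m^{(1)}_{P_0,P}-m^{(0)}_{P_0,P}\}-\Psi(P_0).
\]
By CAR, $E_0(\cdot\mid\Delta=1,V)=E_0(\cdot\mid V)$, so iterated expectations collapse this to $P_0\{D^F_{Q,P}+Q_P(1,W)-Q_P(0,W)\}-\Psi(P_0)$. This equals $P_0\bar D^F_P-\Psi(P_0)=R^F(P,P_0)$. The standard ATE calculation then gives the $g$–$Q$ line: condition on $W$ so that $E_0(Y\mid A,W)=Q_0$, and use $E_0(A\mid W)=g_0(1\mid W)$ to get $P_0\{\tfrac{g_0}{g_P}(1\mid W)(Q_0-Q_P)(1,W)\}+P_0\{(Q_P-Q_0)(1,W)\}$, and symmetrically for $a=0$.

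The main obstacle is bookkeeping rather than any conceptual step. The care point is in Steps 1 and 2: each time we pass from $\Delta/\Pi_P$ to $\Pi_0/\Pi_P$ we must condition on $V$ correctly, and we must keep straight that $m_{P,P}$ uses $\Gamma_P$ while $m_{P_0,P}$ uses $\Gamma_0$, both applied to the same $Q_P$. That distinction is what makes the remainder second order in $(\Pi_P-\Pi_0)\times(\Gamma_P-\Gamma_0)$. CAR is invoked exactly when replacing $E_0(\cdot\mid\Delta=1,V)$ by $E_0(\cdot\mid V)$.
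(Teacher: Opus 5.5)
Your proof is correct and follows essentially the same route as the paper's: compute $P_0$ of each gradient component by conditioning on $V$, rewrite $\Pi_0/\Pi_P=1-(\Pi_P-\Pi_0)/\Pi_P$ to peel off the two $\Pi$-type second-order terms, and reduce what is left to the full-data $g$--$Q$ remainder. The only difference is cosmetic. You collect the leftover first-order terms into $P_0\bar{D}^F_P-\Psi(P_0)=R^F(P,P_0)$ and make the CAR-based cancellation of the $m_{P_0,P}^{(a)}$ terms explicit. The paper instead expands $E_0(D^F_{Q,P}\mid\Delta=1,V)$ directly and leaves that cancellation implicit, and its ``$D_{Q,P}-D_{\Pi,P}$'' should be a sum, as you correctly treat it.
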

The derivation of Lemma \ref{lem:R2_new} is provided in Appendix C. In particular, note that in the exact remainder we have second-order differences in $(P-P_0)$, i.e., $(\Pi_P-\Pi_0)[E_P(D^F_{Q,P}\mid\Delta=1,V)-E_0(D^F_{Q,P}\mid\Delta=1,V)]$, $(g_P-g_0)(Q_P-Q_0)$, and $(\Pi_P-\Pi_0)(m_{P,P}^{(a)}-m_{P_0,P}^{(a)})$.

\section{Simulations}\label{sec:simulation}
We conduct simulations to empirically evaluate the performance of the candidate estimators listed in Table \ref{tab:candidate_estimators}. Two sets of simulations are performed. In the first set, we illustrate the double robustness property and demonstrate that the raking estimator depends on consistent estimation of both $\Pi$ and $Q$, and thus generally lacks consistency with respect to the causal estimand when either is misspecified. In the second set, we compare the estimators in a more realistic scenario where all nuisance components must be estimated, using a discrete super learner to fit the relevant factors.
\begin{table}[ht]
\centering
\resizebox{17cm}{!}{
\begin{tabular}{r|l}
\toprule
\textbf{Candidate estimator} & \textbf{Description} \\ \midrule
Raking & Raking estimator (Section \ref{sec:raking})\\
A-IPCW & Augmented IPCW estimator\\
IPCW-TMLE & IPCW-TMLE without targeting of phase-2 sampling mechanism (Section \ref{sec:ipcw_tmle})\\
IPCW-TMLE target $\Pi$ & IPCW-TMLE with TMLE-based targeting of phase-2 sampling mechanism (Section \ref{sec:ipcw_tmle_target_Pi_local})\\
IPCW-TMLE rake $\Pi$ & IPCW-TMLE with raking-based targeting of phase-2 sampling mechanism (Section \ref{sec:raking})\\
EEE & Efficient estimating-equation estimator (Section \ref{sec:eee})\\
Quasi-TMLE & Plug-in version of EEE (Section \ref{sec:quasi_tmle})\\
TMLE & TMLE that uses an alternative representation of the target parameter (Section \ref{sec:real_tmle})\\
\bottomrule
\end{tabular}
}
\caption{Candidate estimators and their descriptions.}
\label{tab:candidate_estimators}
\end{table}

\subsection{Double robustness demonstration}\label{sec:double_robust_demo}
As discussed in Section \ref{sec:analyze_exact_rem}, the double robustness property of the estimators introduced in this article is particularly desirable when there is prior knowledge about nuisance parameters. For example, if the data come from the same healthcare system, physicians may follow similar treatment assignment rules. As long as these rules are not deterministic (which would violate the positivity assumption), such knowledge can be leveraged to ensure consistency of the estimator. In the two-phase sampling design context, the phase-2 sampling mechanism is often known by design. Even in the absence of such knowledge, the double robustness structure allows one, for example, to share the burden of accurate estimation across both the propensity score and the outcome regression, rather than requiring the outcome regression to be estimated at a fast rate on its own, while still achieving efficiency. 

\begin{table}[ht]
\centering
\resizebox{16cm}{!}{
\begin{tabular}{crccccc}
\toprule
$n$ & Estimator & $|\text{Bias}|$ ($\times 10^{-3}$) & SE ($\times 10^{-2}$) & MSE ($\times 10^{-3}$) & Coverage (\%) & Oracle Coverage (\%) \\
\midrule
\multirow{10}{*}{500}
& Raking                     & 43.1 & 4.53 & 3.906 & 84 & 84 \\
& A-IPCW                     & 3.93 & 6.75 & 4.557 & 95 & 95 \\
& IPCW-TMLE                  & 4.12 & 6.71 & 4.508 & 95 & 95 \\
& IPCW-TMLE + target $\Pi$   & 3.49 & 6.69 & 4.484 & 95 & 96 \\
& IPCW-TMLE + rake $\Pi$     & 3.34 & 6.70 & 4.493 & 95 & 96 \\
& EEE                        & 3.81 & 6.74 & 4.551 & 95 & 95 \\
& Quasi-TMLE          & 3.15 & 6.71 & 4.498 & 95 & 95 \\
& TMLE                       & 4.55 & 6.67 & 4.463 & 95 & 96 \\
\midrule
\multirow{10}{*}{1000}
& Raking                     & 43.3 & 3.51 & 3.099 & 73 & 77 \\
& A-IPCW                     & 0.712 & 4.92 & 2.420 & 95 & 95 \\
& IPCW-TMLE                  & 1.09 & 4.89 & 2.388 & 94 & 95 \\
& IPCW-TMLE + target $\Pi$   & 0.868 & 4.89 & 2.389 & 94 & 95 \\
& IPCW-TMLE + rake $\Pi$     & 0.919 & 4.89 & 2.391 & 94 & 95 \\
& EEE                        & 0.794 & 4.92 & 2.421 & 94 & 95 \\
& Quasi-TMLE          & 0.943 & 4.88 & 2.379 & 94 & 96 \\
& TMLE                       & 0.046 & 4.88 & 2.378 & 94 & 95 \\
\midrule
\multirow{10}{*}{1500}
& Raking                     & 42.0 & 2.54 & 2.406 & 66 & 63 \\
& A-IPCW                     & 0.699 & 3.84 & 1.470 & 95 & 94 \\
& IPCW-TMLE                  & 0.660 & 3.78 & 1.423 & 94 & 95 \\
& IPCW-TMLE + target $\Pi$   & 0.699 & 3.78 & 1.430 & 95 & 95 \\
& IPCW-TMLE + rake $\Pi$     & 0.765 & 3.78 & 1.430 & 95 & 95 \\
& EEE                        & 0.764 & 3.83 & 1.463 & 95 & 95 \\
& Quasi-TMLE          & 0.990 & 3.78 & 1.430 & 95 & 95 \\
& TMLE                       & 0.134 & 3.78 & 1.424 & 95 & 95 \\
\midrule
\multirow{10}{*}{2000}
& Raking                     & 43.5 & 2.25 & 2.400 & 53 & 53 \\
& A-IPCW                     & 0.870 & 3.32 & 1.104 & 95 & 95 \\
& IPCW-TMLE                  & 1.17 & 3.27 & 1.067 & 95 & 95 \\
& IPCW-TMLE + target $\Pi$   & 1.10 & 3.27 & 1.067 & 95 & 95 \\
& IPCW-TMLE + rake $\Pi$     & 1.02 & 3.27 & 1.068 & 95 & 95 \\
& EEE                        & 0.869 & 3.32 & 1.102 & 95 & 95 \\
& Quasi-TMLE          & 1.00 & 3.27 & 1.069 & 95 & 95 \\
& TMLE                       & 1.89 & 3.26 & 1.066 & 95 & 95 \\
\bottomrule
\end{tabular}
}
\caption{Simulation results under correctly specified phase-2 sampling mechanism $\Pi$ and propensity score $g$ over 500 Monte Carlo runs. ``SE'' is the empirical standard error; ``MSE'' is mean squared error. ``Coverage'' is the proportion of runs where the 95\% confidence interval based on the estimated efficient influence curve variance covers the truth; ``Oracle coverage'' uses the empirical SE.}
\label{tab:Pi_g_correct}
\end{table}

To empirically illustrate this double robustness property, we designed a simulation setting motivated by \cite{kang_demystifying_2007}. The details of the data generating process (DGP) can be found in Appendix B. To summarize the DGP, we first draw four latent covariates from independent standard normal distributions, then apply nonlinear transformations to obtain four observed covariates. The treatment assignment and outcome are each drawn from binomial distributions, where the probabilities are linear combinations of the latent covariates on the logit scale. The estimators, however, only observe the transformed covariates and fit generalized linear models, resulting in severe model misspecification. The phase-2 sampling mechanism depends solely on the fully observed covariates, that is, we assume coarsening at random. We focus on scenario (i) described in Section \ref{sec:analyze_exact_rem}, which is arguably the most relevant in practice since the phase-2 sampling mechanism $\Pi$ is often known by design and there may also be knowledge about the propensity score $g$. Accordingly, in our simulations, we supply the estimators with the true values of $\Pi$ and $g$ (noting that the raking estimator does not estimate $g$ and thus does not use it), while the remaining nuisance functions are fit using main-term generalized linear models, which are intentionally misspecified. We consider sample sizes $n=500,1000,1500,2000$, each with 500 Monte-Carlo runs. The results are presented in Table \ref{tab:Pi_g_correct}.

The raking estimator requires both $\Pi$ and $Q$ to be correctly specified to achieve consistency w.r.t. the target estimand (i.e., oracle estimand). As shown in Table \ref{tab:Pi_g_correct}, the raking estimator exhibits large bias and a poor bias-to-standard-error ratio, resulting in declining coverage as the sample size increases. Eventually, this would lead to type I error approaching 1 and confidence interval coverage dropping to 0 (as also shown in simulations in Remark \ref{rem:raking}). In contrast, the remaining estimators benefit from their double robustness property and have nominal oracle coverages. Because the un-centered full-data EIC regression $\bar{m}$ is intentionally misspecified, the additional targeting of $\Pi$ does not yield efficiency gains, as expected. Interestingly, the non-plug-in estimators (A-IPCW and EEE) consistently show higher MSE compared to plug-in estimators (IPCW-TMLE, IPCW-TMLE + target $\Pi$, IPCW-TMLE + rake $\Pi$, Quasi-TMLE, and TMLE). Moreover, the Quasi-TMLE achieves a smaller MSE than its non-plug-in analogue, the EEE estimator, suggesting that the plug-in property can offer meaningful advantages in finite samples.

\subsection{Simulation scenarios with all nuisance functions estimated}\label{sec:simulated_data}
Now, we focus on a simulated data setting where there is no knowledge on nuisance functions, and they need to be estimated. We use a discrete super learner to estimate the nuisance functions. Discrete super learner acts as the `oracle selector', that is, asymptotically, it performs as well as the best algorithm in the super learner library \citep{vdl_sl_2003}. The details on the specifications of each learner and the details on the DGP can be found in Appendix B.

\begin{table}[ht]
\centering
\resizebox{16cm}{!}{
\begin{tabular}{crcccccc}
\toprule
Missing (\%) & Estimator & $|\text{Bias}|$ ($\times 10^{-3}$) & SE ($\times 10^{-2}$) & MSE ($\times 10^{-3}$) & $|\text{Bias}|/\text{SE}$ & Coverage (\%) & Oracle Coverage (\%) \\
\midrule
\multirow{10}{*}{20}
& Raking                    & 18.3 & 2.88 & 1.165 & 0.635 & 92 & 91 \\
& A-IPCW                    & 0.946 & 2.88 & 0.827 & 0.033 & 95 & 96 \\
& IPCW-TMLE                 & 0.189 & 3.15 & 0.990 & 0.006 & 96 & 95 \\
& IPCW-TMLE + target $\Pi$  & 0.230 & 2.87 & 0.821 & 0.008 & 95 & 95 \\
& IPCW-TMLE + rake $\Pi$    & 0.193 & 2.87 & 0.820 & 0.007 & 95 & 95 \\
& EEE                       & 0.916 & 2.87 & 0.824 & 0.032 & 95 & 96 \\
& Quasi-TMLE         & 0.299 & 2.87 & 0.825 & 0.010 & 95 & 96 \\
& TMLE                      & 0.194 & 2.87 & 0.821 & 0.007 & 95 & 96 \\
\midrule
\multirow{10}{*}{50}
& Raking                    & 18.2 & 2.95 & 1.200 & 0.619 & 92 & 91 \\
& A-IPCW                    & 2.71 & 3.06 & 0.943 & 0.089 & 94 & 95 \\
& IPCW-TMLE                 & 0.692 & 4.06 & 1.644 & 0.017 & 96 & 94 \\
& IPCW-TMLE + target $\Pi$  & 0.886 & 3.02 & 0.911 & 0.029 & 94 & 95 \\
& IPCW-TMLE + rake $\Pi$    & 0.783 & 3.03 & 0.917 & 0.026 & 94 & 95 \\
& EEE                       & 2.65 & 3.04 & 0.931 & 0.087 & 94 & 95 \\
& Quasi-TMLE         & 0.827 & 3.04 & 0.920 & 0.027 & 94 & 95 \\
& TMLE                      & 0.859 & 3.02 & 0.910 & 0.028 & 94 & 95 \\
\midrule
\multirow{10}{*}{70}
& Raking                    & 18.0 & 2.99 & 1.218 & 0.600 & 93 & 92 \\
& A-IPCW                    & 4.67 & 3.40 & 1.175 & 0.137 & 93 & 95 \\
& IPCW-TMLE                 & 0.459 & 5.57 & 3.091 & 0.008 & 94 & 95 \\
& IPCW-TMLE + target $\Pi$  & 0.648 & 3.39 & 1.144 & 0.019 & 93 & 95 \\
& IPCW-TMLE + rake $\Pi$    & 0.677 & 3.38 & 1.142 & 0.020 & 93 & 94 \\
& EEE                       & 5.06 & 3.35 & 1.146 & 0.151 & 93 & 95 \\
& Quasi-TMLE         & 1.343 & 3.33 & 1.108 & 0.040 & 93 & 95 \\
& TMLE                      & 0.585 & 3.38 & 1.143 & 0.017 & 92 & 95 \\
\bottomrule
\end{tabular}
}
\caption{Simulation results under various missingness rates at $n=1000$. ``SE'' is the empirical standard error; ``MSE'' is mean squared error; ``Coverage'' is the proportion of runs where the 95\% confidence interval based on the estimated efficient influence curve variance covers the truth; ``Oracle coverage'' uses the empirical SE.}
\label{tab:simulated_data}
\end{table}

We consider three scenarios: in the first, about 20\% of the data have missing covariates; in the second, about 50\% are missing; and in the third, about 70\% are missing. The sample size is fixed at $n=1,000$, and the results are summarized in Table \ref{tab:simulated_data}. The true estimand is $\approx 0.2595$, while the census estimand (targeted by the raking estimator) is $\approx 0.2413$, which is actually very close to the true value. Nevertheless, even in this favorable case where the gap between the two estimands is small, the raking estimator again exhibits poor bias-to-standard-error control and reduced oracle coverage at sample size $1,000$. For the IPCW-TMLE without targeting of $\Pi$, we observe lower efficiency compared to the version that targets $\Pi$, as expected, since targeting $\Pi$ generally leads to efficiency gains. When the missingness rate increases to 50\%, the IPCW-TMLE shows a slightly higher MSE than the raking estimator. However, the oracle coverage of the raking estimator is low, whereas the IPCW-TMLE, despite being less efficient, maintains good bias-to-standard-error control and nominal oracle coverage near 95\%. Consistent with our earlier findings, the plug-in property appears beneficial. Specifically, the A-IPCW shows elevated MSE in both scenarios, and the EEE estimator shows increased MSE when missingness reaches 50\%.

\section{Concluding remarks}\label{sec:conclusion}
In this article, we reviewed several estimators for estimating full-data causal parameters in two-phase sampling design settings, including the raking estimator, the IPCW-TMLE, and a variant of the IPCW-TMLE that also targets the phase-2 sampling mechanism to improve efficiency. We showed that the raking estimator targets the census estimand, which is non-causal and generally differs from the true causal estimand unless the outcome regression model is correctly specified. We also proposed a class of asymptotically equivalent estimators, which can be viewed as constructed in two ways. The first approach relies on a slight rearrangement of the A-IPCW representation of the efficient influence curve, yielding two estimators: the efficient estimating-equations estimator (EEE), which is not a plug-in estimator, and the Quasi-TMLE, which endows the EEE with the plug-in property. The second approach is based on an alternative representation of the target parameter, which may be viewed as yet another rearrangement of the A-IPCW representation. Our simulation studies demonstrate that the plug-in property provides meaningful finite-sample advantages, as plug-in estimators generally achieve smaller MSE compared to their non–plug-in estimators. 

We conclude with several remarks on the limitations of this study and directions for future research. First, as discussed in Section \ref{sec:raking}, there may be opportunities to improve the raking estimator by relaxing its parametric assumptions. For instance, instead of targeting a fixed census estimand defined by a prespecified parametric working model, one could target a parameter defined on a data-adaptively learned working model, in line with recent developments in the adaptive TMLE literature \cite{vdl_adaptive_2023,vdl_adaptive_2024}. Second, in this work we estimated the full-data treatment and outcome mechanisms using standard regression methods based on IPCW-weighted loss functions. This approach may be suboptimal when the phase-2 sample proportion is small, as phase-1 observations are effectively discarded due to their zero weights. Alternative strategies that more efficiently leverage phase-1 data, such as imputation-based methods, could lead to improved estimation of $g$ and $Q$. It would be particularly interesting to explore super learner-based ensembles that include both inverse-weighted and imputation-based candidate learners, using cross-validation to select the optimal candidate for a given dataset and task. Third, although we have developed and studied multiple estimators, our simulation designs were manually constructed and therefore limited in scope. Future work could involve benchmarking these estimators across a broader set of realistic scenarios to better understand the conditions under which each performs best. Finally, even with extensive simulations, it may remain challenging to determine which estimator performs best in any given context, especially since many of them are TMLE variants and thus asymptotically equivalent. In this regard, it would be valuable to investigate ensemble strategies for combining estimators, as also discussed in \cite{barnatchez_efficient_2025}, which enable data-adaptive selection among a class of asymptotically equivalent estimators.

\newpage
\bibliography{references}

\newpage
\appendix
\section{Implementation details}
\subsection{Generalized raking estimator}
The implementation of the generalized raking estimator used in our simulations follows the approach described in \cite{williamson_assessing_2026}. The corresponding code is available in their GitHub repository: \href{https://github.com/PamelaShaw/Missing-Confounders-Methods}{https://github.com/PamelaShaw/Missing-Confounders-Methods} (accessed July 28, 2025).

\subsection{Inverse probability of censoring weighted targeted maximum likelihood estimation with phase-2 sampling mechanism targeted using raking}
In our simulations, we also included a candidate estimator that applies raking to target the phase-2 sampling mechanism within the IPCW-TMLE framework. For this purpose, we implemented our own version of the raking procedure. Below, we provide details on the implementation of this raking-type targeting of $\Pi$. Let $\Pi_n^{(0)}$ be an initial estimator for the phase-2 sample mechanism $\Pi_0$. Define the initial inverse-probability weight $w_i^{(0)}=\Delta_i/\Pi_n^{(0)}(V_i)$ for $i=1,\dots,n$ and write the calibrated weight as $w_i^{(1)}=w_i^{(0)}a_i$ with scale factors $a_i>0$ to be optimized. Under the distance metric $d(p,q)=p\ln(p/q)-p+q$, we have the following convex program:
\begin{equation}\label{eqn:raking_program}
\min_{a_1,\dots,a_n>0}J(a)=\sum_{i=1}^nw_i^{(0)}(a_i\log a_i-a_i+1)\quad\text{s.t.}\quad C(a)=\sum_{i=1}^nw_i^{(1)}\bar{m}_i-\bar{m}_i=0,
\end{equation}
where $m_i=E_n(D^F\mid\Delta=1,V_i)$. Introduce a multiplier $\lambda\in\RR$ for the scalar constraint $C(a)=0$. The Lagrangian is
$$
L(a,\lambda)=\sum_{i=1}^nw_i^{(0)}(a_i\log a_i-a_i+1)+\lambda\left(\sum_{i=1}^nw_i^{(0)}a_i\bar{m}_i-\bar{m}_i\right).
$$
We take the derivative of $L(a,\lambda)$ w.r.t. $a_i$, set it to zero, and get $a_i^\star=\exp(-\lambda\bar{m}_i)$. Note that the positivity of $a_i^\star>0$ is automatic. Plug $a^\star$ into the constraint $C(a)=0$, we have that
$$
0=\sum_{i=1}^nw_i^{(0)}\exp(-\lambda\bar{m}_i)\bar{m}_i-\bar{m}_i.
$$
We then apply the Newton-Raphson solver to solve for $\lambda$. Specifically, for iterations $k=0,1,\dots$, define
$\lambda^{(k+1)}=\lambda^{(k)}-( F'(\lambda)^{(k)})^{-1}F^{(k)}$, where the gradient $F'(\lambda)=-\sum_{i=1}^nw_i^{(0)}\exp(-\lambda\bar{m}_i)\bar{m}_i^2$. In our implementation, we start with $\lambda^{(0)}=0$ and stop when $|F(\lambda^{(k)})|<10^{-8}$. After convergence, set $a_i^\star=\exp(\hat{\lambda}\bar{m}_i)$, $w_i^{(1)}=w_i^{(0)}a_i^\star$, and $\Pi_n^{(1)}(V_i)=\Delta_i/w_i^{(1)}$. The solution satisfies the Karush-Kuhn-Tucker (KKT) conditions for the convex program in (\ref{eqn:raking_program}) and is therefore optimal.

\section{Simulation setups}
\subsection{Data-generating processes (DGPs)}
\textbf{Simulation DGP in Section \ref{sec:double_robust_demo}:} For the double robustness demonstration in Section \ref{sec:double_robust_demo}, the data-generating process (DGP) is as follows. We first draw i.i.d. latent variables $(Z_1,Z_2,Z_3,Z_4)$, each from a standard normal distribution $\mathcal{N}(0,1)$. We then apply nonlinear transformations to obtain the observed covariates $(W_1,W_2,W_3,W_4)$, where 
$$
W_1=\exp(Z_1/2), \quad 
W_2=Z_2^3, \quad 
W_3=(Z_4Z_3/25+0.6)^3, \quad 
W_4=(Z_3+Z_4+20)^2.
$$
Treatment assignment $A$ is drawn from a binomial distribution with probability $\text{expit}(-0.2Z_1- 0.6Z_2+0.9Z_4)$, and the outcome $Y$ is generated from a binomial distribution with probability $\text{expit}(-1+0.6Z_1-0.4Z_2+0.2 Z_3-0.5Z_4+1.2A)$. Phase-2 membership $\Delta$ is drawn from a binomial distribution with probability $\text{expit}(-0.1Z_1+0.1Z_2)$. For individuals with $\Delta=0$, we censor their $(W_3,W_4)$. Note that the phase-2 sampling mechanism depends only on the fully observed covariates, thereby satisfying the coarsening at random assumption.

\textbf{Simulation DGP in Section \ref{sec:simulated_data}:} For the simulated data setting in Section \ref{sec:simulated_data}, the DGP is defined as follows. We draw i.i.d. baseline covariates $(W_1,W_2,W_3,W_4)$ from $\mathcal{N}(1,1)$. Treatment assignment $A$ is drawn from a binomial distribution with probability $\text{expit}(-0.2W_1- 0.6W_2+0.2W_4)$, and the outcome $Y$ is generated from a binomial distribution with probability $\text{expit}(0.1W_1^2-0.01W_2^3+0.2W_3-0.1W_4+0.6A+0.5AW_2^2)$. The outcome regression includes higher-order polynomial terms and interactions between covariates and treatment, so fitting a main-term parametric logistic regression would lead to model misspecification. Phase-2 membership $\Delta$ is drawn from a binomial distribution with probability $\text{expit}(1.1+0.2W_1+0.2Y)$ for the 20\% missingness scenario, $\text{expit}(-0.3+0.2W_1+0.2Y)$ for the 50\% missingness scenario, and $\text{expit}(-1.1+0.2W_1+0.2Y)$ for the 70\% missingness scenario. For individuals with $\Delta = 0$, we censor their $(W_3, W_4)$. Again, the phase-2 sampling mechanism depends only on fully observed covariates and satisfies the coarsening at random assumption. We use a super learner with 5-fold cross-validation scheme involving a simple learner main-term generalized linear model and more flexible machine learning algorithms including xgboost (\cite{chen_xgboost_2015}), random forest (\cite{random_brieman_2001,wright_ranger_2017}), and Bayesian Additive Regression Trees (\cite{bart_chipman_2010,dorie_dbarts_2025}) for the estimation of the nuisance functions. We use the \texttt{R} package \texttt{sl3} (\texttt{R} version 4.4.0, \texttt{sl3} version 1.4.5) to implement the super learner. In our simulations, we employ a discrete super learner with 5-fold cross-validation. The super learner library includes four base learners: a generalized linear model (\texttt{Lrnr\_glm}), extreme gradient boosting (\texttt{Lrnr\_xgboost}), random forest (\texttt{Lrnr\_ranger}), and a discrete Bayesian additive regression tree sampler (\texttt{Lrnr\_dbarts}). All learners are used with their default hyperparameters in \texttt{sl3}.

\subsection{Simulations evaluating the local linearization described in Remark \ref{rem:linearization}}\label{sec:appendix_local_linearization}
In Remark \ref{rem:linearization}, we discussed a local linearization approach to approximate the un-centered full-data EIC $\bar{D}^F$, which allows us to avoid re-fitting the regression function $E(\bar{D}^F_{n,\epsilon}\mid\Delta=1,V)$ at each iteration. This approximation can be applied to the IPCW-TMLE with targeting of $\Pi$, the Quasi-TMLE, and the TMLE. A question is whether this approximation substantially increases the MSE of the resulting estimator. To investigate this, we conducted simulations comparing estimators that re-fit the un-centered full-data EIC $\bar{m}=E(\bar{D}^F\mid\Delta=1,V)$ to those that use the linearized approximation. The results, summarized in Table \ref{tab:linearization}, show that in general, the approximation does not substantially worsen the MSE. However, we found that in settings where the estimated treatment mechanism $g$ takes on extreme probabilities (either very small or very large) and these small probabilities are not properly truncated, the MSE can increase dramatically under the approximation. As shown in Table \ref{tab:linearization_compare_positivity}, when near positivity violations occur, the Quasi-TMLE with linearization can perform considerably worse than its counterpart without linearization. We attribute this to instability in estimating $E(\dot{\bar{D}}^F\mid\Delta=1,V)$. Specifically, as discussed in Remark \ref{rem:linearization}, $\dot{\bar{D}}^F$ involves terms $H^2$, where $H(A,W)=A/g(1\mid W)-(1-A)/g(0\mid W)$. When $g$ is small, $H$ can take on extremely large values ($H^2$ is even worse), making the estimation of $E(\dot{\bar{D}}^F\mid\Delta=1,V)$ difficult and, consequently, the resulting estimator unstable.
\begin{table}[ht]
\centering
\resizebox{16cm}{!}{
\begin{tabular}{crcccc}
\toprule
$n$ & Approach & MSE ($\times 10^{-3}$) & Coverage (\%) & Oracle Coverage (\%) & Avg. time/run (s) \\
\midrule
\multirow{2}{*}{500}
& Re-fit $\bar{m}$ & \textbf{1.681} & 96 & 96 & \textbf{7.64} \\
& Linearization & \textbf{1.692} & 96 & 96 & \textbf{7.16} \\
\midrule
\multirow{2}{*}{1000}
& Re-fit $\bar{m}$ & \textbf{0.863} & 95 & 95 & \textbf{11.07} \\
& Linearization & \textbf{0.862} & 95 & 95 & \textbf{10.77} \\
\midrule
\multirow{2}{*}{1500}
& Re-fit $\bar{m}$ & \textbf{0.551} & 95 & 96 & \textbf{14.91} \\
& Linearization & \textbf{0.552} & 95 & 96 & \textbf{14.39} \\
\midrule
\multirow{2}{*}{2000}
& Re-fit $\bar{m}$ & \textbf{0.419} & 95 & 95 & \textbf{18.74} \\
& Linearization & \textbf{0.420} & 95 & 96 & \textbf{18.07} \\
\bottomrule
\end{tabular}
}
\caption{Simulation results comparing re-fitting $\bar{m}$ versus linearization. ``MSE'' is mean squared error; ``Coverage'' is the proportion of runs where the 95\% confidence interval based on the estimated efficient influence curve variance covers the truth; ``Oracle coverage'' uses the empirical SE. The average time per run is computed on a single-core CPU, averaged over 500 Monte-Carlo runs.}
\label{tab:linearization}
\end{table}
\begin{table}[ht]
\centering
\resizebox{16cm}{!}{
\begin{tabular}{crcccccc}
\toprule
$n$ & Estimator & $|\text{Bias}|$ & SE & MSE & $|\text{Bias}|$/SE & Coverage & Oracle Coverage \\
\midrule
\multirow{3}{*}{500} 
 & EEE & 0.135 & 7.643 & \textbf{\textcolor{red}{58.322}} & 0.018 & 0.93 & 0.99 \\
 & Quasi-TMLE & 0.109 & 0.347 & \textbf{\textcolor{orange}{0.132}} & 0.314 & 0.95 & 0.90 \\
 & Quasi-TMLE (no linearization) & 0.0013 & 0.048 & \textbf{\textcolor{green!60!black}{0.0023}} & 0.027 & 0.96 & 0.95 \\
\bottomrule
\end{tabular}
}
\caption{Simulation results comparing efficient estimating-equation estimator (non-plug-in), Quasi-TMLE (plug-in) with linearization described in Remark \ref{rem:linearization} and without.}
\label{tab:linearization_compare_positivity}
\end{table}

\section{Proofs of lemmas}
\subsection{Proof of Lemma \ref{lem:can_grad_of_psi_new}}
\begin{proof}
Let's first focus on the treatment-specific mean part of the target parameter. Suppose the joint density $p(w_1,w_2)$ is treated as given. Then, our target parameter (treatment-specific mean part) mapping can be represented as 
$$
P\rightarrow\int_{w_1,w_2}\bar{Q}(w_1,w_2,1)\,dP(w_1,w_2),
$$
where $\bar{Q}(w_1,w_2,a)=E(Y\mid W_1=w_1,W_2=w_2,A=a)$. We now find the efficient influence curve of this parameter mapping. For simplicity, let's first focus on the case when $Y$ is binary. Then, the parameter is
$$
P\rightarrow\int_{w}\frac{P(Y=1,w,A=1)}{P(w,A=1)}\,dP(w).
$$
Note that since $P(Y=1,w,A=1)$ is a mean parameter, it has efficient influence curve $D^F_{Y,w,1}(X)=I(Y=1,W=w,A=1)-P(Y=1,W=w,A=1)$. Similarly, $P(W=w,A=1)$ has efficient influence curve $D^F_{W,1}(X)=I(W=w,A=1)-P(W=w,A=1)$. Recall also that the A-IPCW mapping applied to $D^F_{Y,W,1}$ and $D^F_{W,1}$ are
\begin{align*}
D^\star_{Y,w,1}&=\frac{\Delta}{\Pi(V)}D^F_{Y,w,1}(X)-\frac{E(D^F_{Y,w,1}(X)\mid\Delta=1,V)}{\Pi(V)}(\Delta-\Pi(V))\quad\text{and}\\
D^\star_{w,1}&=\frac{\Delta}{\Pi(V)}D^F_{w,1}(X)-\frac{E(D^F_{w,1}(X)\mid\Delta=1,V)}{\Pi(V)}(\Delta-\Pi(V)).
\end{align*}
By the delta-method, the EIC of $P\rightarrow\int_{w}P(Y=1,w,A=1)/P(w,A=1)\,dP(w)$ treating $p(w_1,w_2)$ as given is then
\begin{align*}
D_Q=\int_w\frac{1}{P(W=w,A=1)}D^\star_{Y,w,1}-\frac{P(Y=1,W=w,A=1)}{[P(W=w,A=1)]^2}D^\star_{w,1}\,dw
\end{align*}
Applying the A-IPCW mapping we would end up with four terms, we denote them as terms $\mathbf{A}$, $\mathbf{B}$, $\mathbf{C}$, and $\mathbf{D}$. Specifically,
\begin{align*}
\mathbf{A}&=\int\frac{1}{P(A=1,W=w)}\cdot\frac{\Delta}{\Pi(V)}[I(Y=1,A=1,W)-P(Y=1,A=1,W=w)]\,dP(w)\\
\mathbf{B}&=-\int\frac{1}{P(A=1,W=w)}\cdot\frac{\Delta-\Pi(V)}{\Pi(V)}E[I(Y=1,A=1,W=w)-P(Y=1,A=1,W=w)\mid\Delta=1,V]\,dP(w)\\
\mathbf{C}&=-\int\frac{P(Y=1,A=1,W=w)}{[P(A=1,W=w)]^2}\cdot\frac{\Delta}{\Pi(V)}[I(A=1,W=w)-P(A=1,W=w)]\,dP(w)\\
\mathbf{D}&=\int\frac{P(Y=1,A=1,W=w)}{[P(A=1,W=w)]^2}\cdot\frac{\Delta-\Pi(V)}{\Pi(V)}E[I(A=1,W=w)-P(A=1,W=w)\mid\Delta=1,V]\,dP(w)
\end{align*}
For $\mathbf{A}$, we have
\begin{align*}
\mathbf{A}&=\int\frac{\Delta}{\Pi(V)}\cdot\frac{I(Y=1,A=1,W=w)}{P(A=1,W=w)}-\frac{\Delta}{\Pi(V)}\cdot\frac{P(Y=1,A=1,W=w)}{P(A=1,W=w)}\,dP(w)\\
&=\frac{\Delta}{\Pi(V)}\left[\frac{I(Y=1,A=1)}{P(A=1\mid W)}-\Psi(P)\right].
\end{align*}
For $\mathbf{B}$, we have
\begin{align*}
\mathbf{B}&=-\frac{\Delta-\Pi(V)}{\Pi(V)}\left[\int\frac{E[I(Y=1,A=1,W=w)\mid\Delta=1,V]}{P(A=1,W=w)}-\frac{P(Y=1,A=1,W=w)}{P(A=1,W=w)}\,dP(w)\right]\\
&=-\frac{\Delta-\Pi(V)}{\Pi(V)}\left[E\left(\frac{I(A=1)\cdot Y}{P(A=1\mid W)}\middle\vert\Delta=1,V\right)-\Psi(P)\right]
\end{align*}
For $\mathbf{C}$, we have
\begin{align*}
\mathbf{C}&=-\frac{\Delta}{\Pi(V)}\left[\int\frac{P(Y=1,A=1,W=w)}{[P(A=1,W=w)]^2}I(A=1,W=w)-\frac{P(Y=1,A=1,W=w)}{P(A=1,W=w)}\,dP(w)\right]\\
&=-\frac{\Delta}{\Pi(V)}\left[\frac{I(A=1)}{P(A=1\mid W)}Q(1,W)-\Psi(P)\right]
\end{align*}
For $\mathbf{D}$, we have
\begin{align*}
\mathbf{D}&=\frac{\Delta-\Pi(V)}{\Pi(V)}\left[E\left(\frac{I(A=1)}{P(A=1\mid W)}Q(1,W)\mid\Delta=1,V\right)-\Psi(P)\right]
\end{align*}
To summarize, the EIC for the parameter $\Psi$
$$
D^\star=\frac{\Delta}{\Pi(V)}\left[\frac{I(A=1)}{g(1\mid W)}(Y-Q(1,W))\right]-\frac{\Delta-\Pi(V)}{\Pi(V)}E\left(\frac{I(A=1)}{g(1\mid W)}(Y-Q(1,W))\middle\vert\Delta=1,V\right).
$$
We note that this is just the A-IPCW mapping applied to the $P_Y$-score component of the EIC.

Next, we find the influence curve treating $Q$ and $P_V$ as known. Consider the equivalent mapping
$$
P\rightarrow\frac{E(\Delta Q(1,W)\mid\Delta=1,V)}{P(\Delta=1\mid V)}.
$$
Note that the influence curve of the numerator is $\Delta Q(1,W)-E(\Delta Q(1,W)\mid V)$, and the influence curve of the denominator is $\Delta-P(\Delta=1\mid V)$. By the delta-method, the influence curve of the ratio of the numerator and denominator is
\begin{align*}
\text{IC}_{P_{X\mid V}}&=\frac{1}{P(\Delta=1\mid V)}[\Delta Q(1,W)-E(\Delta Q(1,W)\mid V)]-\frac{E(\Delta Q(1,W)\mid V)}{[P(\Delta=1\mid V)]^2}[\Delta-P(\Delta=1\mid V)]\\
&=\frac{\Delta}{\Pi(V)}[Q(1,W)-E(Q(1,W)\mid\Delta=1,V)]
\end{align*}

Next, we find the influence curve of the $P_V$-score component treating $Q$ and $P_{X\mid V}$ as fixed, which is given by $E(Q(1,W)\mid\Delta=1,V)-E_{P_V}E(Q(1,W)\mid\Delta=1,V)$.

The EIC derivation of the control ($A=0$) portion of the target parameter mirrors. To summarize, the EIC of the target parameter mapping
$$
P\rightarrow E_{P_V}E_{P_{X\mid V}}[Q_{P_X}(1,W)-Q_{P_X}(0,W)\mid\Delta=1,V]
$$
is
$$
D^\star=D_{Q}+D_{P_{X\mid V}}+D_{P_V},
$$
where
\begin{align*}
D_{Q}&=\frac{\Delta}{\Pi_P(V)}\left[H_{g,P_X}(A,W)(Y-Q_{P_X}(A,W))\right]-\frac{\Delta-\Pi_P(V)}{\Pi_P(V)}E\left[H_{g,P_X}(A,W)(Y-Q_{P_X}(A,W))\right],\\
D_{P_{X\mid V}}&=\frac{\Delta}{\Pi_P(V)}\left[Q_{P_X}(1,W)-Q_{P_X}(0,W)-E_{P_{X\mid V}}[Q_{P_X}(1,W)-Q_{P_X}(0,W)\mid\Delta=1,V]\right],\\
D_{P_V}&=E_{P_{X\mid V}}[Q_{P_X}(1,W)-Q_{P_X}(0,W)\mid\Delta=1,V]-\Psi(P),
\end{align*}
and where 
$$
H_{g,P_X}(A,W)=\frac{A}{g_{P_X}(1\mid W)}-\frac{1-A}{g_{P_X}(0\mid W)}.
$$
\end{proof}

\subsection{Proof of Lemma \ref{lem:R2_new}}
\begin{proof}
Let $D^F_{Q_P}=H_{g_P}(A,W)(Y-Q_P(A,W))$, where $H_{g_P}=A/g_P(1\mid W)-(1-A)/g_P(0\mid W)$. We first consider the first two components of the canonical gradient:
$$
D_{Q,P}-D_{\Pi,P}=\frac{\Delta}{\Pi_P(V)}D^F_{Q,P}-\frac{\Delta-\Pi_P(V)}{\Pi_P(V)}E_\Gamma D^F_{Q,P}.
$$
Note that
\begin{align*}
P_0\left\{D_{Q,P}-D_{\Pi,P}\right\}&=P_0\left\{\frac{\Delta}{\Pi_P(V)}D^F_{Q,P}-\frac{\Delta-\Pi_P(V)}{\Pi_P(V)}E_{P}(D^F_{Q,P}\mid\Delta=1,V)\right\}\\
&=P_0\left\{\frac{\Delta}{\Pi_P(V)}\left[D^F_{Q,P}-E_{P}(D^F_{Q,P}\mid\Delta=1,V)\right]+E_{P}(D^F_{Q,P}\mid\Delta=1,V)\right\}\\
&=P_0\left\{\frac{\Pi_0}{\Pi_P}(V)\left[E_{P_0}(D^F_{Q,P}\mid\Delta=1,V)-E_{P}(D^F_{Q,P}\mid\Delta=1,V)\right]+E_{P}(D^F_{Q,P}\mid\Delta=1,V)\right\}\\
&=P_0\left\{\frac{\Pi_P-\Pi_0}{\Pi_P}(V)\left[E_P(D^F_{Q,P}\mid\Delta=1,V)-E_{0}(D^F_{Q,P}\mid\Delta=1,V)\right]+\underbrace{E_{0}(D^F_{Q,P}\mid\Delta=1,V)}_{\mathbf{A}}\right\}.
\end{align*}
For term $\mathbf{A}$, we have that
\begin{align*}
P_0\mathbf{A}&=P_0\left\{\frac{g_P-g_0}{g_P}(1\mid W)(Q_P-Q_0)(1,W)-\frac{g_P-g_0}{g_P}(0\mid W)(Q_P-Q_0)(0,W)\right\}\\
&-P_0\left\{(Q_P-Q_0)(1,W)\mid\Delta=1,V\right\}+P_0\left\{(Q_P-Q_0)(0,W)\mid\Delta=1,V\right\}.
\end{align*}
Let $m^{(a)}_{P_1,P_2}=E_{P_1}[Q_{P_2}(a,W)\mid\Delta=1,V]$. For the treatment arm part of $D_{\Gamma,P}$, we have that
\begin{align*}
P_0\left\{\frac{\Delta}{\Pi_P(V)}\left[Q_P(1,W)-m_{P,P}^{(1)}\right]\right\}&=P_0\left\{\frac{\Pi_0}{\Pi_P}(V)\left[m_{P_0,P}^{(1)}-m_{P,P}^{(1)}\right]\right\}\\
&=P_0\left\{\frac{\Pi_P-\Pi_0}{\Pi_P}(V)\left[m_{P,P}^{(1)}-m_{P_0,P}^{(1)}\right]+m_{P_0,P}^{(1)}-m_{P,P}^{(1)}\right\}.
\end{align*}
Similarly, for the control arm part of $D_{\Gamma,P}$, we have
$$
-P_0\left\{\frac{\Pi_P-\Pi_0}{\Pi_P}(V)\left[m_{P,P}^{(0)}-m_{P_0,P}^{(0)}\right]+m_{P_0,P}^{(0)}-m_{P,P}^{(0)}\right\}.
$$
Finally, for the last piece of the canonical gradient:
$$
D_{P_V,P}=E_P[Q_P(1,W)-Q_P(0,W)\mid\Delta=1,V]-\Psi(P),
$$
we have that
$$
P_0D_{P_V,P}=P_0\left\{m_{P,P}^{(1)}-m_{P,P}^{(0)}\right\}-\Psi(P).
$$
Therefore, to summarize, the exact remainder is given by
\begin{align*}
R(P,P_0)&=P_0\left\{\frac{\Pi_P-\Pi_0}{\Pi_P}(V)\left[E_P(D^F_{Q,P}\mid\Delta=1,V)-E_{0}(D_{Q,P}^F\mid\Delta=1,V)\right]\right\}\\
&+P_0\left\{\frac{g_P-g_0}{g_P}(1\mid W)(Q_P-Q_0)(1,W)-\frac{g_P-g_0}{g_P}(0\mid W)(Q_P-Q_0)(0,W)\right\}\\
&+P_0\left\{\frac{\Pi_P-\Pi_0}{\Pi_P}(V)\left(m_{P,P}^{(1)}-m_{P_0,P}^{(1)}\right)-\frac{\Pi_P-\Pi_0}{\Pi_P}(V)\left(m_{P,P}^{(0)}-m_{P_0,P}^{(0)}\right)\right\}.
\end{align*}
\end{proof}

\section{Double-robust IPCW-loss based machine learning of full-data $Q$ and $g$}
Recall that across our TMLE variants discussed in previous sections, the initial estimators of the outcome regression $Q$ and treatment mechanism $g$, both of which are full-data parameters, are fit using an IPCW-loss. Specifically, we first obtain an estimator $\Pi_n$ for the phase-2 sampling mechanism $\Pi_0$, and then weight observations by $\Delta/\Pi_n$ when estimating $Q_0$ and $g_0$. This IPCW approach is attractive because it allows the use of any machine learning algorithm that accepts weights. However, a drawback is that the resulting estimators of $Q_0$ and $g_0$ depend on $\Pi_0$ being consistently estimated; thus, one cannot generally achieve true double robustness, since the consistency of the TMLE requires $\Pi$ to be consistently estimated. To overcome this limitation and attain true double-robust estimation even for the initial $Q_0$ and $g_0$, we propose the following iterative strategy.

Let $R_0(Q)$ be the true risk defined by $P_{X,0}L^F(Q)$ for a given full-data loss function $L^F(Q)$ such as the squared-error loss for continuous $Y$ or the log-likelihood loss for binary $Y$. We could estimate this full-data risk, which is just a mean of a full-data function, with an IPCW estimator using an estimator $\Pi_n$ for $\Pi_0$:
$$
R_n(Q)=P_n\frac{\Delta}{\Pi_n^\star(V)} L^F(Q),
$$
where $\Pi_n^\star$ is a targeted estimator of $\Pi_n$ satisfying the score equation:
$$
0=P_n \frac{E_n(L^F(Q)\mid \Delta=1,V)}{\Pi_{n}^\star(V)}(\Delta-\Pi_{n}^\star(V)).
$$
Therefore, $R_n$ is also an A-IPCW estimator:
$$
R_n(Q)=P_n\frac{\Delta}{\Pi_n^\star(V)} L^F(Q)-\frac{E_n(L^F(Q)\mid\Delta=1,V)}{\Pi_n^\star(V)}(\Delta-\Pi_n^\star(V)).
$$
This motivates an iterative strategy which is applicable with any machine learning algorithm that accommodates weights. For illustration, consider the highly adaptive lasso (HAL), a flexible machine learning algorithm (\cite{vdl_generally_2017,benkeser_hal_2016}). Starting with an initial estimator $Q_n^0$ of $Q_0$, we target $\Pi_n$ toward $Q_n^0$ to obtain $\Pi_{n,Q^0}^\star$, and define the IPCW loss $R_n^m(Q)=P_n \Delta/\Pi_{n,Q^m}^\star L^F(Q)$. 
We then update $Q_n^0$ by re-fitting an IPCW-HAL estimator $Q_n^1$ that minimizes $R_n^0(Q)$. Iterating the targeting of $\Pi$ and re-fitting of $Q$ would yield a sequence $\{Q_n^m:m=0,\dots,K\}$ converging to a double-robust HAL estimator $Q_n$ of $Q_0$. In particular, if either $\Pi_0$ is consistently estimated or the conditional mean $E_n(L^F(Q)\mid \Delta=1,V)$ is consistently estimated, then our
HAL estimator will converge to the true target function at the typical HAL
rate as in the full-data model. The same iterative approach applies to the estimation of the treatment mechanism $g_0$ under the full-data log-likelihood loss $L^F_1(g)$.

\section{A particular TMLE for CAR censored data estimation problems.}
Let $X\sim P_{X,0}\in {\cal M}^F$ and our target parameter is $\Psi^F:{\cal M}^F\rightarrow\RR$.
Suppose $\Psi^F(P_X)=\Psi^F(Q_X)$ for some part $Q(P_X)$ of the full-data distribution, and let $L^F(Q)$ be a full-data loss function for learning $Q(P_X)$. We will also denote $Q_X$ with $Q$ and $Q_{X,0}$ with $Q_0$. Let $D^F_{Q,G}$ be the canonical gradient of this full-data parameter, possibly depending on another nuisance parameter $G(P_X)$ beyond $Q$, and let $R^F((Q,G),(Q_0,G_0))=\Psi^F(Q)-\Psi^F(Q_0)+P_X D^F_{Q,G}$ be the corresponding exact remainder. We only observe $n$ i.i.d. copies of $O=\Phi(C,X)\sim P_{P_{X,0},\Pi_0}$ and  the conditional distribution $\Pi$ of $C$, given $X$, satisfying CAR. Let ${\cal M}=\{P_{P_X,\Pi}:P_X\in {\cal M}^F,\Pi\in \Gamma\}$ be the observed data model implied by the full-data model and censoring mechanism model $\Gamma$. Let $\Psi:{\cal M}\rightarrow\RR$ be defined by 
$\Psi(P_{P_X,\Pi})=\Psi^F(Q(P_X))$, assuming identifiability of $\Psi^F(P_X)$ from observed data distribution. 
Let $D_{P_X,\Pi,D^F_{Q,G}}$ be the canonical gradient of $\Psi$ at $P_{P_X,\Pi}$ and let $R((P_X,\Pi),(P_{X,0},\Pi_0))=\Psi^F(Q)-\Psi^F(Q_0)+P_{P_{X,0},\Pi_0}D_{P_X,\Pi,D^F_{Q,G}}$ be the exact remainder.
There is a rich literature on CAR censored data models including an A-IPCW representation of the canonical gradient introduced by \cite{robins_estimation_1994,robins_analysis_1995} (\cite{vdl_unified_2003}).

{\bf Representation of the observed data model canonical gradient:}
Let $T(P_X)$ be the tangent space of the full-data model at $P_X$. 
Let $A_{P_X}:T(P_X)\rightarrow L^2_0(P_{P_X,\Pi})$ be the score operator defined by
$A_{P_X}(h)=E_{P_X}(h(X)\mid O)$. Let $A^\star_{\Pi,\text{np}}:L^2_0(P_{P_X,\Pi})\rightarrow L^2_0(P_X)$ be the nonparametric adjoint defined by 
$A^\star_{\Pi,\text{np}}(V)=E(V(O)\mid X)$. Let $\Pi_{T(P_X)}$ be the projection operator onto $T(P_X)$ in $L^2_0(P_X)$. 
The adjoint of $A_{P_X}$ is then given by $A_{P_X,\Pi}^\star\equiv \Pi_{T(P_X)}A_{\Pi,\text{np}}A_{P_X}$. If the full-data model is nonparametric then $T(P_X)=L^2_0(P_X)$ and $\Pi_{T(P_X)}$ is the identity operator. 
Then, we can define the information operator $I_{P_X,\Pi}:T(P_X)\rightarrow T(P_X)$ defined by 
$I_{P_X,\Pi}=A_{P_X,\Pi}^\star A_{P_X}: T(P_X)\rightarrow T(P_X)$. Let $I_{P_X,\Pi,np}=A_{\Pi}^\star A_{P_X}$ be the nonparametric information operator. The canonical gradient can be represented as
\[
D_{P_X,\Pi,D^F_{Q,G}}=A_{P_X}I_{P_X,\Pi}^{-1}(D^F_{Q,G}).\]
One might call this the least-squares projection representation of the canonical gradient.
The A-IPCW representation applied to a full-data function is of the form
\[
D^{\text{A-IPCW}}_{P_X,\Pi,\tilde{D}^F}=U_{\Pi}(\tilde{D}^F)-\Pi(U_{\Pi}(\tilde{D}^F)\mid T_{\text{CAR}}(\Pi)),\]
where $T_{\text{CAR}}(\Pi)=\{S\in L^2_0(P_{P_X,\Pi}): E(S\mid X)=0\}$ is the tangent space of $\Pi$ under CAR only,
and $U_{\Pi}(\tilde{D}^F)$ is an inverse probability of censoring weighted full-data function satisfying
$E_{\Pi}(U_{\Pi}(\tilde{D}^F)\mid X)=\tilde{D}^F(X)$. 
If the full-data model is nonparametric, then $D^{\text{A-IPCW}}_{P_X,\Pi,D^F_{Q,G}}=D_{P_X,\Pi,D^F_{Q,G}}$, i.e., the A-IPCW mapping applied to the full-data canonical gradient yields the canonical gradient. 
However, if the full-data tangent space is not saturated, then one needs to select $\tilde{D}^F=I_{P_X,\Pi,\text{np}}I_{P_X,\Pi}^{-1}(D^F_{Q,G})$ in order to get the canonical gradient, which  is as complicated as the canonical gradient itself. This makes the A-IPCW representation not helpful when the full-data model is not saturated.

{\bf Full-data TMLE:} Let $\{Q_{X,e}:e\}$ be a least favorable path in the full-data model so that $d/de_0 L(Q_{X,e_0})=D^F_{Q_X}$ at $e=e_0=0$.
Given an initial estimator $Q_{n}$ of $Q_0$, in the full-data model, we could then use the TMLE defined by 
$\Psi^F(Q_n^\star)$ with $e_n^F=\arg\min_e P_{X,n} L^F(Q_{n,e})$ and $Q_n^\star=Q_{n,e_n^F}$, $P_{X,n}$ the empirical distribution of full-data, where one might possibly have to iterate a few times or use a universal LFM.  One would then have solved
$0=P_{X,n} D^F_{Q_n^\star,G_n}$.

{\bf A new observed data TMLE:}
Here we propose a TMLE that allows us to use the same full-data LFM but just estimates $e_n$ by optimizing an efficient estimator of the full-data risk  $P_{X,0}L^F(Q_{n,e})$. In other words, our newly proposed TMLE is exactly the same as the full-data TMLE except that $e_n$ is different, essentially an approximation of $e_n^F$.
For a given $Q_{n,e}$ we define the full-data risk/parameter $R^F_{Q_{n,e}}(P_X)\equiv P_X L^F(Q_{n,e})$.
Let $D^F_{R^F_{Q_{n,e}}(\cdot),P_X}$ be the full-data model canonical gradient of this full-data parameter.
Typically, we have that $D^F_{R^F_{Q_{n,e}}(\cdot),P_X}=L^F(Q_{n,e})-R^F_{Q_{n,e}}(P_X)$, just the full-data loss function minus its expectation. Let's assume this scenario, which is always true when the full-data model is nonparametric, but typically also holds for other full-data models.
Then $D_{P_X,\Pi,D^F_{R^F_{Q_{n,e}}(\cdot),P_X} }$ is the canonical gradient of $R^F_{Q_{n,e}}(\cdot)$ at $P_{P_X,\Pi}$.
Recall   \[
D_{P_X,\Pi,D^F_{R^F_{Q_{n,e}}(\cdot),P_X}  } =A_{P_X}I_{P_X,\Pi}^{-1}(L^F(Q_{n,e}))- P_X L^F(Q_{n,e}).\]
For each $e$, consider a TMLE $P_{X,n}^{e,\star}$ targeting $P_X L^F(Q_{n,e})$, so that 
\[ 
0=P_n A_{P_{X,n}^{e,\star}}I_{P_{X,n}^{e,\star},\Pi_n}^{-1}\left(L^F(Q_{n,e})\right)-R^F_{Q_{n,e}}(P_{X,n}^{e,*}),\]
which implies that the plug-in TMLE $R^F_{Q_{n,e}}(P_{X,n}^{e,\star})$ of the full-data risk at $Q_{n,e}$ is given by
\[
R^F_{Q_{n,e}}(P_{X,n}^{e,\star})=P_n A_{P_{X,n}^{e,\star}}I_{P_{X,n}^{e,\star},\Pi_n}^{-1}(L^F(Q_{n,e}) ).\]
Let $e_n^0=0$, $k=0$, and iterate
\[
e_n^{k+1}=\arg\min_e R^F_{Q_{n,e}}(P_{X,n}^{e_n^k}).\]
At convergence we have $e_n^{k+1}\approx e_n^k$. 
Then, $Q_n^\star=Q_{n,e_n}$ solves its derivative equation given by
\[
0=P_n A_{P_{X,n}^{e_n,\star}}I_{P_{X,n}^{e_n,\star},\Pi_n}^{-1}\left( d/de_n L^F(Q_{n,e_n})\right ).\]
If $Q_{n,e}$ is a universal full-data model LFM, then we have
\[
d/de_nL^F(Q_{n,e_n})=D^F_{Q_{n,e_n},G_n}.\]
For a general local LFM, we can iterate the above by replacing the off-set $Q_n$ by $Q_{n,e_n}$ and iterate till convergence at which $e_n\approx 0$.
Thus, with $P_{X,n}^*=P_{X,n}^{e_n,\star}$, we have
\[
0=P_n A_{P_{X,n}^\star}I_{P_{X,n}^\star,\Pi_n}^{-1}(D^F_{Q_{n,e_n},G_n}).\]
Thus, $Q_n^\star=Q_{n,e_n}$ solves the efficient influence curve score equation so that $\Psi^F(Q_{n,e_n})$ can be analyzed as a TMLE making it an efficient estimator under standard regularity conditions. 

{\bf Non-TMLE analogue:}
Suppose now that we just use the initial estimator $P_{X,n}$ and $\Pi_n$ to construct
a one-step estimator or, equivalently, A-IPCW estimator of $R^F_{Q_{n,e}}(P_X)$ for all $e$, given by:
\[
R^F_{Q_{n,e}}(P_{X,n},\Pi_n,P_n)\equiv P_n A_{P_{X,n}}I_{P_{X,n},\Pi_n}^{-1}(L^F(Q_{n,e})).\]
Let
\[
e_n=\arg\min_e R^F_{Q_{n,e}}(P_{X,n},\Pi_n,P_n).\]
Then, $Q_n^\star=Q_{n,e_n}$ solves its derivative equation given by
\[
0=P_n A_{P_{X,n}}I_{P_{X,n},\Pi_n}^{-1}(d/de_n L^F(Q_{n,e_n})).\]
If $Q_{n,e}$ is a universal full-data model  LFM, then we have
\[
d/de_nL^F(Q_{n,e_n})=D^F_{Q_{n,e_n},G_n}.\]
So then
\[
0=P_n A_{P_{X,n}}I_{P_{X,n},\Pi_n}^{-1}(D^F_{Q_{n,e_n},G_n}).\]
This can now be analyzed as follows.
We have
\[
\begin{array}{l}
-P_0 A_{P_{X,n}}I_{P_{X,n},\Pi_n}^{-1}(D^F_{Q_{n,e_n},G_n})=
(P_n-P_0) A_{P_{X,n}}I_{P_{X,n},\Pi_n}^{-1}(D^F_{Q_{n,e_n},G_n}).
\end{array}
\]
The right-hand side is an empirical process term so that under the assumption that $P_{X,n},\Pi_n$ are consistent for $P_{X,0}$, $\Pi_0$, and $Q_n$ is consistent for $Q_0$, and a Donsker class condition: 
\[
R_{n,1}\equiv (P_n-P_0) A_{P_{X,n}}I_{P_{X,n},\Pi_n}^{-1}(D^F_{Q_{n,e_n},G_n})-P_n A_{P_{X,0}}I_{P_{X,0},\Pi_0}^{-1}(D^F_{Q_0,G_0})=o_P(n^{-1/2}),
\]
providing the desired efficient sample mean approximation. 
We also note that if $\Pi_n=\Pi_0$  but $P_{X,n}$ is inconsistent converging to misspecified $P_{X,1}$, while $D^F_{Q_{n,e_n},G_n}$ is still consistent for an unbiased full-data function $D^F_0$ such as $D^F_{Q_0,G_0}$ (but underlying problem might also have double-robust structure so that even in $D^F$ there might be inconsistent estimation of full-data distribution), then we can still obtain asymptotic linearity with influence curve
$A_{P_{X,1}}I_{P_{X,1},\Pi_0}^{-1}(D^F_0)$.
This can be further extended to allow $\Pi_n$ is a consistent estimator of $\Pi_0$ such as HAL and $P_{X,n}$ is inconsistent.
Let's consider the case that $P_{X,n},\Pi_n$ are both consistent. 

In addition, we have
\begin{align*}
-P_0 A_{P_{X,n}}I_{P_{X,n},\Pi_n}^{-1}(D^F_{Q_{n,e_n},G_n})&=
-P_0 \left\{ A_{P_{X,n}}I_{P_{X,n},\Pi_n}^{-1}(D^F_{Q_{n,e_n},G_n})-
A_{P_{X,n}}I_{P_{X,n},\Pi_0}^{-1}(D^F_{Q_{n,e_n},G_n})\right\} \\
&-P_0 A_{P_{X,n}}I_{P_{X,n},\Pi_0}^{-1}(D^F_{Q_{n,e_n},G_n}).
\end{align*}
Using that $P_0 A_{P_{X,0}}I_{P_{X,0},\Pi_1}^{-1}(D^F)=P_{X,0}D^F$ for any $\Pi_1$, the first term is a second-order remainder by noting that it equals
\begin{align*}
R_{n,2}&\equiv -P_0 \left\{ A_{P_{X,n}}I_{P_{X,n},\Pi_n}^{-1}(D^F_{Q_{n,e_n},G_n})-
A_{P_{X,n}}I_{P_{X,n},\Pi_0}^{-1}(D^F_{Q_{n,e_n},G_n})\right\} \\
&+P_0\left\{ A_{P_{X,0}}I_{P_{X,0},\Pi_n}^{-1}(D^F_{Q_0,G_0})-
A_{P_{X,0}}I_{P_{X,0},\Pi_0}^{-1}(D^F_{Q_0,G_0})\right\}\\
&+P_{X,0}D^F_{Q_{n,e_n},G_n}
\end{align*}
This is now a second-order difference in $(P_{X,n}-P_{X,0},D^F_{Q_n,G_n}-D^F_{Q_0,G_0})$ and $\Pi_n-\Pi_0$, so that  $R_{n,2}=o_P(n^{-1/2})$ under rate conditions.

Due to $E(A_{P_{X,1}}I_{P_{X,1},\Pi_0}^{-1}(D^F))=E_{P_{X,0}}D^F$, for any $P_{X,1}$, we have
\[
-P_0 A_{P_{X,n}}I_{P_{X,n},\Pi_0}^{-1}(D^F_{Q_{n,e_n},G_n})=-P_{X,0}(D^F_{Q_{n,e_n},G_n}).\]
By definition of the exact second-order remainder in the full-data model the right-hand side equals
\[
\Psi^F(Q_{n,e_n})-\Psi^F(Q_0)-R^F((Q_{n,e_n},G_n),(Q_0,G_0)).\]
Combining all these yields the following expansion:
\[
\begin{array}{l}
\Psi^F(Q_{n,e_n})-\Psi^F(Q_0)=P_n A_{P_{X,0}}I_{P_{X,0},\Pi_0}^{-1}(D^F_{Q_0,G_0})
+R_{n,1}+R_{n,2}+R^F(Q_{n,e_n},G_n,Q_0,G_0).
\end{array}
\]
So, if $R^F((Q_{n,e_n},G_n),(Q_0,G_0))=o_P(n^{-1/2})$ also, then it follows that $\Psi^F(Q_{n,e_n})$ is asymptotically efficient. 

The latter non-TMLE efficient estimator of the full-data risk is easier and avoids the need for iteration, while we still obtain a plug-in estimator $\Psi^F(Q_{n,e_n})$. Thus, one might consider estimation of the full-data risk with A-IPCW type estimators without being concerned with using a TMLE.

\textbf{Comparison with IPCW-full-data TMLE:} The IPCW-TMLE with targeting of $\Pi$ also uses an estimator of the full-data risk of $Q_{n,e}$ and minimizes that. By targeting the censoring mechanism in this estimator, we obtain an efficient estimator of the full-data risk if the full-data model is nonparametric. In that case our estimator above is similar to the IPCW-TMLE with targeting of $\Pi$. However, it clarifies that we can use any efficient estimator of the full-data risk and that for the sake of targeting this is fine, even though for constructing an initial estimator the IPCW loss is very convenient. Moreover, if the full-data model is not saturated the targeted IPCW estimator of the full-data risk is still inefficient. We might have called the above TMLE a calibrated full-data TMLE since it uses the same form estimator $\Psi^F(Q_{n,e_n})$ but it replaces the $e_n^F$ in the full-data TMLE by the minimizer of an efficient estimator of the full-data risk: so the difference is $\Psi^F(Q_{n,e_n})-\Psi^F(Q_{n,e_n^F})$.
Therefore, we could the extra work going into  selecting $e_n$ as a calibration of the full-data TMLE taking into account the censoring of the full-data.

\end{document}